\documentclass[journal]{IEEEtran}
\usepackage{graphicx} %
\usepackage{algorithm}
\usepackage{algpseudocode}
\usepackage{comment}
\usepackage{amsmath}

\usepackage{amsthm}
\usepackage{thmtools,thm-restate}      
\usepackage{amssymb}
\usepackage{tikz}
\usepackage{caption}
\usepackage{subcaption}
\usepackage{float}
\usepackage{packages/listingsutf8}
\usepackage{float}

\usepackage{hyperref}                  
\usepackage[capitalise]{cleveref} 
\usepackage{enumitem}

\usepackage{thmtools}

\usepackage{cancel}

\newtheorem{definition}{Definition}

\newtheorem{theorem}{Theorem}
\newtheorem{lemma}{Lemma}

\newcommand{\vect}[1]{\mathbf{#1}}

\title{Physical Time-Lock Puzzles}
\author{Niloufar~Sayadi,
        Chenglu~Jin,
        Phuong~Ha~Nguyen,
        Zheng~Yang,
        and~Marten~van~Dijk%
\thanks{N. Sayadi is with Centrum Wiskunde \& Informatica and Vrije Universiteit Amsterdam, the Netherlands (e-mail: niloufar.sayadi@cwi.nl).}%
\thanks{C. Jin is with Centrum Wiskunde \& Informatica, Amsterdam, the Netherlands (e-mail: chenglu.jin@cwi.nl).}
\thanks{P. H. Nguyen is with eBay, San Jose, CA, the USA (e-mail: phuongha.ntu@gmail.com).}%
\thanks{Z. Yang is with Southwest University, Chongqing, China (e-mail: youngzheng@swu.edu.cn).}
\thanks{M. van Dijk is with Centrum Wiskunde \& Informatica, Vrije Universiteit Amsterdam, the Netherlands, and the University of Connecticut, Storrs, CT, the USA (e-mail: marten.van.dijk@cwi.nl).}%
}

\begin{document}
\maketitle

\begin{abstract}
Traditional time-lock puzzles enforce delayed access to encrypted secrets by relying on inherently sequential computational steps. However, since software-based constructions impose no fundamental bound on the physical execution speed of individual steps, they remain vulnerable to hardware acceleration and improved implementations. In practice, this renders existing schemes ``step-lock'' rather than true ``time-lock'' puzzles, making long-term delay guarantees highly speculative against decades of unpredictable hardware advancement. 

To address this fundamental limitation, we introduce the \emph{Physical Time-Lock Puzzle} (P-TLP), a new paradigm that anchors solving delay directly to the intrinsic, hardware-bounded evaluation latency of silicon hardware. By leveraging noisy Physical Unclonable Functions (PUFs) as non-parallelizable delay oracles, P-TLPs establish resistance against both parallel computation and algorithmic acceleration under the random oracle model. To tolerate PUF evaluation noise while maintaining a tightly concentrated and predictable solving-delay window, we propose a composite puzzle architecture that combines multiple independent basic puzzles. 

We formally prove the optimality of a sequential greedy solving strategy and derive tightly concentrated solving-delay windows using Hoeffding and Berry-Esseen bounds. We additionally prove early-solve hardness, establishing that adversaries with substantial classical computation budgets cannot bypass the physically-enforced PUF evaluation bottleneck. We validate our theoretical analysis with an FPGA prototype built around a configurable Ring Oscillator PUF, confirming that our analytical delay predictions tightly match empirical measurements. Our results demonstrate that P-TLPs can deliver highly predictable, long-term delay guarantees, enabling practical deployment in high-value applications such as digital legacy management.

\end{abstract}

\begin{IEEEkeywords}
Physical Unclonable Functions, Time-Lock Puzzles
\end{IEEEkeywords}

\section{Introduction}
\label{sec:intro} 
Informally, a time-lock puzzle is a cryptographic mechanism for effectively sending messages into the future~\cite{RivestShamirWagner1996,MahmoodyMoranVadhan2011,BitanskyGoldwasserJainEtAl2016}. Specifically, a message is encrypted such that it can only be decrypted after a strictly specified amount of time has elapsed, but never beforehand. In this context, the encrypted ciphertext acts as the puzzle, and the original message is the solution successfully retrieved after a defined time parameter $T$. From a security perspective, no adversary should be able to solve the puzzle in time significantly less than $T$, even when natively equipped with massively parallel computing resources.

Early constructions of time-lock puzzles are purely software-based, relying exclusively on inherently sequential computations. The pioneering construction by Rivest, Shamir, and Wagner~\cite{RivestShamirWagner1996} is based on the serial nature of repeated squaring and the hardness of factoring, while recent schemes have explored randomized encodings~\cite{BitanskyGoldwasserJainEtAl2016} and lattice problems~\cite{agrawalr2024time}. In all of these existing software constructions, the solving algorithms strictly demand an \textit{inherently sequential} computational execution path.

Most software-based time-lock puzzles bound computational steps instead of execution time, acting as step-lock puzzles sensitive to hardware speed. Developing puzzles independent of raw computation remains an open challenge. Rivest’s 1999 puzzle~\cite{Rivest1999} illustrates this difficulty, as its 35-year estimate relied on speculative Moore’s Law projections to account for future hardware acceleration.

In practice, this speculative delay guarantee often appeared overly optimistic. The puzzle, intended to be secure until 2034, was solved 15 years early in 2019. A software implementation utilizing the GNU Multiple Precision Arithmetic Library on a commercial desktop CPU successfully extracted the solution after 3 years and 3 months of continuous execution~\cite{Rivest2019}. Even more remarkably, an independent group leveraged customized FPGA hardware to accelerate the underlying operations, solving the identical puzzle in a mere 2 months~\cite{Rivest2019}. This vividly demonstrated the significant inaccuracies and vulnerabilities inherent to purely software-based delay guarantees against customized hardware acceleration.

To tackle these fundamental vulnerabilities, we ask a core question: \textbf{can we cryptographically bind computation time directly to a specific physical hardware, ensuring that only this exact piece of hardware can be used to solve the puzzle?} If so, the required solving time immediately becomes independent of advances in general-purpose algorithmic computing and external implementation variance. Consequently, constructing a highly predictable, long-term time-lock puzzle becomes practically feasible.

Silicon Physical Unclonable Functions (PUFs)~\cite{Gassend2002} are hardware primitives that leverage uncontrollable and unclonable manufacturing variations to produce unique challenge-response behavior. For physical time-lock puzzles, we specifically exploit their intrinsic evaluation latency $\tau_P$. Because this propagation delay is rooted in physical electronics, no adversary can evaluate a PUF faster than the hardware-defined latency $\tau_P$.

Essentially, we utilize the PUF as a non-parallelizable random oracle within our puzzle construction. During generation, a generator securely queries the PUF with a randomly chosen challenge; to subsequently unlock the puzzle, a solver must perform an exhaustive search across the entire defined challenge space to locate that exact query. Because evaluating a massive search space sequentially is inherently more time-consuming than executing a single generation query, this asymmetrical workload naturally establishes a structural efficiency gap between the puzzle generator and the solver. A seminal result by Mahmoody et al.~\cite{MahmoodyMoranVadhan2011} established a formal negative result in the standard random-oracle model, strictly ruling out time-lock puzzles that require more parallel time to solve than the total work required to generate the puzzle. However, by substituting the parallelizable random oracle with a physical silicon PUF, we formally anchor the adversary to strict, \emph{non-parallelizable} physical hardware delay. This enables us to successfully bypass the theoretical impossibility result of random oracle-based time-lock puzzles.

Unlike traditional software-based TLPs, our P-TLPs require physical PUF access for both generation and solving. While public solvability is a common feature of software TLPs, our physical construction intentionally shifts the paradigm by binding the delay strictly to a specific physical token. This novel trade-off successfully addresses hardware acceleration vulnerabilities while inevitably losing compatibility with certain traditional TLP applications that require purely public puzzles.  Consequently, P-TLPs are uniquely suited for high-value applications requiring tangible physical anchoring:

\begin{itemize}[noitemsep, topsep=0pt]
\item \textbf{Digital Legacy Management}: By encrypting virtual assets within a hardware token, a creator ensures heirs cannot access the legacy prematurely. Unlike traditional software-based puzzles, the physical token prevents remote copying and stealth solving.
\item \textbf{Delayed-Access Hardware Wallets}: P-TLPs can secure cryptocurrency keys for predetermined lockdown periods, binding access to a physical PUF to mitigate permanent asset loss from forgotten passwords or sudden death.
\item \textbf{Strict Key-Escrow Schemes}: P-TLPs establish physically regulated escrow procedures, structurally guaranteeing that restricted message keys remain mathematically inaccessible to regulators until an explicit time period has elapsed.

\end{itemize}
\paragraph{Our Contributions.}
In this paper, we propose a \emph{Physical Time-Lock Puzzle (P-TLP)} that leverages a noisy silicon PUF as a non-parallelizable hardware delay primitive. Compared to traditional software-based time-lock puzzles, our approach binds the solving effort to the intrinsic evaluation latency of a specific hardware device. Our main contributions are\footnote{Importantly, designing PUFs immune to advanced machine learning modeling attacks remains an open problem. Our work establishes the foundational P-TLP architecture; the FPGA implementation presented in this paper is strictly a timing and error-correction proof-of-concept, not a claim of modeling resilience. As designing secure, lightweight, strong PUFs remains a highly active area of hardware research, our framework can seamlessly integrate future state-of-the-art PUF designs.}: 
\begin{itemize}
\item We introduce the formal framework for P-TLP operating over noisy PUF oracles. We construct a basic puzzle and rigorously prove the optimality of the proposed solving algorithm.
\item We present an advanced, composite P-TLP construction that can better tolerate PUF noise and restricts variance without compromising the intrinsic efficiency gap.
\item We derive tight analytical windows for honest solving delays utilizing concentrated Hoeffding bounds. Furthermore, we establish a guarantee of early-solve hardness, ensuring that adversaries granted substantial classical oracle testing budgets cannot bypass the physically-enforced PUF evaluation limits.
\item We implement the proposed P-TLP practically on an FPGA using a configurable Ring Oscillator PUF, successfully validating our tight theoretical delay derivations. Finally, we demonstrate the unique advantages of anchoring time-locks to physical hardware for high-value applications such as digital legacy management.

\end{itemize}

\paragraph{Organization.} Section~\ref{sec:notation_background} reviews relevant background and related work. Section~\ref{sec:Silicon_tlp} introduces the definitional framework of P-TLP. Section~\ref{sec:basicPUZ} presents a basic P-TLP construction with its analysis, while Section~\ref{sec:composition_puz} details our composite architecture. We establish formal security proofs in Section~\ref{sec:secproof} and provide experimental FPGA validations in Section~\ref{sec:Experimets}. We discuss more details about the potential applications of P-TLP in the digital legacy management in Section~\ref{sec:applications}. Finally, we conclude our paper and discuss future work in Section~\ref{sec:conclusion}.

\section{Background and Related Work}\label{sec:notation_background}

\subsection{Physically Unclonable Functions}
\label{subsec:PUF_intro}
Intuitively, a silicon Physically Unclonable Function (PUF) embedded within a microchip acts as a hardware fingerprint possessing three critical properties: (1) it evaluates a given challenge (input) $\vect{c}$ to generate a random-looking response (output) $\vect{r}$ by leveraging intrinsic manufacturing process variations; (2) it is physically unclonable; and (3) it is mathematically unclonable. Mathematical unclonability ensures that the physical function cannot be efficiently modeled or predicted in software, even when an adversary is given a polynomial number of challenge-response pairs.

Over the past decade, a variety of strong PUF constructions have been proposed, including Arbiter PUFs~\cite{Gassend2002}, Ring Oscillator PUFs~\cite{Suh2007}, XOR APUFs~\cite{Suh2007}, Lightweight Secure PUFs~\cite{Majzoobi2008a}, Composite PUFs~\cite{Sahoo2014}, LPN-based PUFs~\cite{HerderRenDijkEtAl2016,jin2017fpga}, Interpose PUFs~\cite{NguyenSJMRD19}, and LP PUFs~\cite{wisiol2022towards}.

Despite these advances, many silicon PUF constructions remain vulnerable to physical cloning~\cite{Helfmeier2013,Tajik2014} and advanced cryptanalysis~\cite{Nguyen2015,Sahoo2015,kraleva2022cryptanalysis}. Machine-learning-based modeling attacks have proven especially potent~\cite{Lim2004,Ruhrmair2010,Ruhrmair2013,Tobisch2015,Becker2015,mishra2024calypso,sayadi2025breaking}.

Importantly, our P-TLP framework is agnostic to the underlying PUF architecture, requiring only that the implemented strong PUF maintains mathematical security against modeling attacks given the specified oracle access. As designing secure, lightweight strong PUFs remains a highly active area of hardware research, our construction can be seamlessly instantiated with future state-of-the-art PUF designs. Furthermore, the physical evaluation time of a PUF constitutes a fundamental hardware measurement that cannot be bypassed or parallelized. This unique physical bound makes PUFs exceptionally well-suited for enforcing hardware-bound solving delays in physical time-lock puzzles, assuming the adversary cannot physically or mathematically clone the specific PUF.

\subsection{Related Work}
\label{subsec:related}
Time-lock puzzles were first introduced by Rivest, Shamir, and Wagner~\cite{RivestShamirWagner1996} as a cryptographic mechanism for enforcing delayed access to encrypted information. Subsequent work by Mahmoody et al.~\cite{MahmoodyMoranVadhan2011} provided a formal treatment of time-lock puzzles and established fundamental limitations on constructing puzzles with a work gap between the generator and a parallel adversary in the random-oracle model. Later, Bitansky et al.~\cite{BitanskyGoldwasserJainEtAl2016} introduced a general standard-model framework for time-lock puzzles based on randomized encodings, demonstrating how inherently sequential computation can be enforced without relying on number-theoretic assumptions.

At a high level, most time-lock puzzle constructions follow a common pattern: puzzle generation is fast, while puzzle solving requires carrying out a sequence of computational work that cannot be parallelized. The delay is enforced by lower-bounding the number of computational steps that must be executed by any solver, even adversaries equipped with substantial parallel resources. As a result, the delay guarantees in these constructions are expressed entirely in terms of computational steps rather than elapsed physical time.

A representative example is the classical RSW construction~\cite{RivestShamirWagner1996}, which is rooted in the serial nature of modular exponentiation and the hardness of integer factoring. Specifically, to solve the puzzle, an honest solver must compute the value $b=a^{2^t} \pmod n$, where $a$ is a random integer in $\mathbb{Z}_n^*$, $n=p\times q$ is the product of two randomly generated prime numbers, and $t$ defines the time parameter for the puzzle. Without knowledge of the prime factorization of $n$, the solver is forced to perform exactly $t$ sequential squarings to compute $b$, imposing a strict computational delay. In contrast, the puzzle generator can compute $b$ highly efficiently by utilizing their knowledge of the secret primes $p$ and $q$. First, the generator computes the reduced exponent $e = 2^t \pmod{\phi(n)}$, where $\phi(n)=(p-1)(q-1)$. The value $b$ is then directly computed via a single modular exponentiation $b = a^e \pmod n$. Any attempt by an adversary to shortcut this sequential squaring process without exponentiating step-by-step computationally reduces to the well-known integer factorization problem.

In recent years, the study of time-lock puzzles has expanded significantly to support advanced cryptographic features and multi-instance environments. For example, homomorphic time-lock puzzles computationally enable operations on time-locked data without unlocking it early~\cite{malavolta2019homomorphic}, while non-malleable constraints secure puzzles against active tampering and related-puzzle attacks in strictly composable protocols~\cite{freitag2021non}. To handle multiple instances efficiently, other extensions have introduced chained structures for sequential multi-instance solving with public verifiability~\cite{abadi2021multi,abadi2025scalable}, as well as batchable architectures capable of amortizing the cost of generating or solving many puzzles simultaneously~\cite{srinivasan2023transparent,dujmovic2024time}. Closely related to this theoretical ecosystem are Verifiable Delay Functions (VDFs)~\cite{boneh2018verifiable}, a primitive utilizing similar sequential hardness assumptions to strictly certify elapsed time with polylogarithmic verification, rather than directly hiding a secret message. While all of these extensions drastically improve cryptographic functionality, scalability, and verifiability, they fundamentally continue to enforce delay strictly through sequential computational effort, leaving them inevitably sensitive to raw hardware acceleration and parallel processing speeds.

A fundamentally different approach to enforcing cryptographic delay is taken in  CaSCaDE~\cite{baum2024cascade}, which leverages physical communication latency rather than local computation. By exploiting the finite speed of light across satellite communication networks, the construction imposes a strict physical lower bound on message propagation time. Through carefully orchestrated interactive protocols among geographically distributed nodes, CaSCaDE ensures that specific protocol steps cannot be executed faster than the underlying network latency. This allows for the creation of publicly verifiable time-lock puzzles and verifiable delay functions whose security is anchored in physical constraints rather than computational hardness assumptions. However, the exact delay enforced by CaSCaDE is inherently tied to a complex distributed infrastructure, which severely restricts its practical deployability for standalone or lightweight applications.

\section{Physical Time-Lock Puzzles}\label{sec:Silicon_tlp}

\begin{definition}[PUFs - Physical Unclonable Functions~\cite{jin2022programmable}{}]\label{def:puf}
A PUF ${\mathsf{P}}$ is a physical system that can be stimulated with challenges $c_i$ drawn from a challenge set $C_{\mathsf{P}} =\{ 0,1\}^\lambda$, upon which it reacts by producing corresponding responses $r_i$ from a response set $R_{\mathsf{P}} \subseteq \{0,1 \}^m$. Each response $r_i$ depends not only on the applied challenge but also on intrinsic manufacturing variations within ${\mathsf{P}}$ that effectively render it physically unclonable with existing technology. The tuples $(c_i,r_i)$ are defined as the challenge-response pairs (CRPs) of ${\mathsf{P}}$. We generally refer to $\lambda$ as the security parameter of the PUF.
\end{definition}

\paragraph{Noisy PUF Oracle.}
Each PUF inherently defines a challenge–response mapping. However, due to natural environmental noise, repeated queries on the exact same challenge may yield slightly different responses. Definition~\ref{def:puf-indist} formalizes this physical noisy behavior by abstracting the PUF as an idealized oracle, capturing the noisy response evaluation via a per-bit error probability $e_p$.

\begin{definition}[PUF-Oracle]\label{def:puf-indist}
Let $\mathcal{O}$ be an oracle with a challenge--response (CRP) list. For each challenge $c \in \{0,1\}^\lambda$:

\begin{equation*}
\mathcal{O}(c) = 
\begin{cases} 
r \oplus \varepsilon_r, & \begin{aligned}[t]
  &\text{if } \exists(c,r) \in \text{List}, \\
  &\varepsilon_r \leftarrow \text{Bernoulli}(e_p)^m,
\end{aligned} \\[2ex]
r, & \begin{aligned}[t]
  &\text{if } (c, \cdot) \notin \text{List}, \\
  &r \leftarrow \{0,1\}^m, \\
  &\text{List} \leftarrow \text{List} \cup \{(c,r)\}.
\end{aligned}
\end{cases}
\end{equation*}

Here, $\text{Bernoulli}(e_p)^m$ determines a binary noise vector $\varepsilon_r \in \{0,1\}^m$ where each entry is i.i.d.\ and follows a Bernoulli distribution with probability parameter\footnote{In practice, $e_p$ may depend on both the challenge $c$ and its specific position within the response. For the purposes of this work, we treat $e_p$ as the expected value of its underlying distribution. Using an expected value suffices for our failure analysis without overcomplicating the noise modeling; however, we explicitly acknowledge that this simplified model makes PUF indistinguishability an idealized assumption in our formal proofs. Note that the security analysis in Section~\ref{sec:secproof} assumes the worst-case scenario for the defenders with $e_p = 0$, meaning the attackers can query the PUF without any noise.} $e_p$, and $r$ is uniformly random over $\{0,1\}^m$.
\end{definition}
While recent work~\cite{van2023theoretical} provides a precise definition of PUF-based random oracles, it incorporates error correction into its construction. In contrast, we do not attempt to correct PUF responses, so we define the PUF oracle differently.

\paragraph{PUF–Oracle Indistinguishability.}
Beyond being inherently hard to invert, a secure PUF should be computationally indistinguishable from an idealized PUF oracle as established in Definition~\ref{def:puf-indist}. We formalize this property below.

Let $\mathsf{Exp}_{\mathcal{D}}^{\mathsf{P}}(\lambda)$ and $\mathsf{Exp}_{\mathcal{D}}^{\mathcal{O}}(\lambda)$ denote the experiments where a PPT distinguisher $\mathcal{D}$ adaptively queries either a physical PUF $\mathsf{P}$ or a PUF oracle $\mathcal{O}$, respectively, and outputs a bit $b$ indicating whether $\mathcal{D}$ correctly identified the source in the experiments.

We say that PUF $\mathsf{P}$ is \emph{oracle-indistinguishable} if, for all PPT distinguishers~$\mathcal{D}$, there exists a negligible function~$\mathrm{negl}(\cdot)$ such that the advantage of distinguishing $\mathsf{P}$ from $\mathcal{O}$ is negligible in its security parameter $\lambda$, formalized as:

\[
\begin{split}
\mathrm{Adv}^{\mathrm{PUF\text{-}ind}(\mathsf{P}, \mathcal{O})}_{\mathcal{D}}(\lambda) 
&= \bigl\lvert \Pr[\mathsf{Exp}^{\mathsf{P}}_{\mathcal{D}} (\lambda)=1] -\Pr[\mathsf{Exp}^{\mathcal{O}}_{\mathcal{D}} (\lambda) = 1] \bigr\rvert \\
&\le \mathrm{negl}(\lambda).
\end{split}
\]

Defining this indistinguishability allows us to rigorously analyze our theoretical constructions in the \emph{PUF oracle model} rather than directly analyzing a concrete physical PUF. Hence, throughout the remainder of this paper, we perform all theoretical analyses by modeling the PUF as an abstract oracle $\mathcal{O}$, consistent with Definition~\ref{def:puf-indist}. %
The security parameter $\lambda$ strictly governs the indistinguishability between the physical PUF and the ideal oracle, while also defining the size of the underlying challenge space. Importantly, this definition also means that the physical PUF remains secure against any modeling attacks given bounded oracle access. Furthermore, because our puzzle framework relies on comparing exactly two PUF evaluations for a given target challenge (one during generation, one during solving), it is mathematically symmetrical to treat the first query as noise-free and attribute the aggregated variance entirely to the second query as modeled in Definition~\ref{def:puf-indist}.

\paragraph{PUF-based Time-Lock Puzzles.}
A PUF-based time-lock puzzle is constructed by successfully coupling puzzle generation with puzzle solving, both of which require active queries to the PUF $\mathsf{P}$. The puzzle generation algorithm ($\mathsf{PUZ.Gen}$) produces a secret round-key and associated helper data; conversely, the solving algorithm ($\mathsf{PUZ.Solve}$) subsequently attempts to recover the round-key given only the helper data.

\begin{definition}[PUF-based Time-Lock Puzzle Scheme]
A PUF-based time-lock puzzle $\mathsf{PUZ}$ relative to a PUF $\mathsf{P}$ is a pair of probabilistic polynomial-time (PPT) algorithms $\mathsf{PUZ} = (\mathsf{PUZ.Gen}, \mathsf{PUZ.Solve})$ strictly requiring oracle access to $\mathsf{P}$. These algorithms take an optional unique label $\ell$ (where $\ell = \mathrm{nil}$ if strict independence from other puzzle instances is not required) and operate as follows:
\begin{itemize}
    \item $(k, h; g) \gets \mathsf{PUZ.Gen}(\ell)$: Outputs a round-key $k$, helper data $h$, and the generation cost $g$ (the total number of PUF invocations).
    \item $(\hat{k}; s) \gets \mathsf{PUZ.Solve}(\ell, h)$: Outputs a candidate round-key $\hat{k}$ (or $\perp$ if no valid key is successfully recovered), and the solving cost $s$ (the total number of PUF invocations).
\end{itemize}
\end{definition}

\paragraph{Induced Random Variables.}
Executing $\mathsf{PUZ}(\ell)$ produces a joint outcome $(K,H,G,\hat{K},S)$:
the generated round key $K$, helper data $H$, generation cost $G$, recovered round key $\hat{K}$, and solving cost $S$. A concrete execution of $\mathsf{PUZ}(\ell)$ yields a realization $(k,h,g,\hat{k},s)$ drawn directly from this joint distribution.

Notably, the joint distribution $(K,H,G,\hat{K},S)$ is assumed to be identical and independent of the input label $\ell$. This i.i.d. assumption enables the analysis of total solving efforts across $n$ independent puzzle instances as a sum of independent random variables.
\begin{definition} [Induced Random Variables]\label{def:joint}
Let $\mathsf{PUZ}=(\mathsf{PUZ.Gen},\mathsf{PUZ.Solve})$ be a PUF time-lock puzzle scheme.
We say the jointly distributed random variables $(K,H,G,\hat{K},S)$ \emph{correspond to} $\mathsf{PUZ}$ if, for all input labels $\ell$, their joint distribution is defined by
\begin{multline*}
\Pr[(K,H,G,\hat{K},S) = (k,h,g,\hat{k},s)] \\
= \Pr[(k,h,g,\hat{k},s)\leftarrow \mathsf{PUZ}(\ell)].
\end{multline*}
\end{definition} 

Definitions for P-TLP properties are provided alongside their analysis in Section~\ref{sec:basicPUZ}.

\section{Basic Puzzle \texorpdfstring{$\mathsf{PUZ}$}{PUZ}}\label{sec:basicPUZ}
\subsection{Construction}
Let $\mathcal{D}$ be a target distribution over the domain $\{0, 1, \dots, B-1\}$. The basic puzzle generation process, denoted $(k,h;g) \gets \mathsf{PUZ.Gen}(\ell)$, proceeds as follows: First, a secret round key $k \sim \mathcal{D}$ is sampled. Next, $k$ and the puzzle input label $\ell$ are strictly mapped to a challenge $c$ via a collision-resistant hash function. Finally, the PUF is physically evaluated at $c$, and the resulting response serves as the public helper data $h$ within the generated puzzle. Notice that generation requires only a single PUF invocation ($g=1$).

$$
\begin{array}{rl}
\multicolumn{2}{l}{\mathsf{PUZ.Gen}(\ell):} \\
1. & k \sim \mathcal{D} \\
2. & c \gets \mathrm{Hash}(k \parallel \ell) \\
3. & h \gets \mathsf{PUF}(c) \\
4. & \mathrm{Return}\ (k, h; 1)
\end{array}
$$

To subsequently solve the basic puzzle via $(\hat{k}; s) \gets \mathsf{PUZ.Solve}(\ell,h)$, a solver systematically enumerates candidate keys from the domain, starting sequentially from $0$. For each candidate $\hat{k}$, the solver derives the corresponding challenge and queries the PUF to obtain a physical response $\hat{r}$. The search terminates immediately upon identifying a key $\hat{k}$ whose response distance to the helper data $h$ falls within a predefined noise threshold $\tau \in (0, 1)$ (i.e., Hamming distance $d_H(\hat{r}, h) \le \tau \cdot m$).

$$
\begin{array}{rl}
\multicolumn{2}{l}{\mathsf{PUZ.Solve}(\ell, h):} \\
1. & \mathrm{For}\ \hat{k} = 0 \text{ to } B - 1 \ \mathrm{do:} \\
2. & \quad \hat{c} \gets \mathrm{Hash}(\hat{k} \parallel \ell) \\
3. & \quad \hat{r} \gets \mathsf{PUF}(\hat{c}) \\
4. & \quad \mathrm{If}\ d_H(\hat{r}, h) \le \tau \cdot m \ \mathrm{then:} \\
5. & \quad \quad \mathrm{Return}\ (\hat{k}; \hat{k}+1) \\
6. & \mathrm{Return}\ (\bot; B)
\end{array}
$$
If $\bot$ is returned, it is considered a failure in the search, which means that the valid $\hat{k} = k$ has been missed, potentially due to the noise in the evaluation of the PUF. 

\subsection{Optimal Solving Strategy}\label{sec:opt_solving}

\paragraph{Optimal Candidate Enumeration Strategy.}
An optimal solver should evaluate puzzle key candidates in order of strictly decreasing likelihood. This logically requires a structural property within the underlying key distribution that explicitly aligns the probability order with a sequential numerical search starting from $0$. We formally define this required property below and prove that the greedy enumeration strategy is strictly optimal.

\begin{definition}[Decreasing Distribution of $K$]\label{def:decreasingD}
Let $K$ be a random variable representing the secret round key. The distribution of $K$ is defined as strictly decreasing over its domain if, for all integers $i$ and $j$ such that $0 \le i < j < B$:
$$ \Pr[K = i] > \Pr[K = j] $$
\end{definition} 

Assuming the round keys are drawn from a decreasing distribution, a greedy approach, that tests the most probable candidates first, strictly minimizes the expected solving effort.

\begin{lemma}[Optimality of Greedy Enumeration]\label{lem:greedy_opt}
Let $\mathsf{PUZ}$ be a basic PUF-based time-lock puzzle where the secret key $K$ is supported over the domain $\{0, 1, \dots, B-1\}$. If $K$ follows a decreasing distribution, sequentially enumerating candidate keys from $0$ to $B-1$ in the first $B$ steps minimizes the expected solving effort among all possible sequential strategies.
\end{lemma}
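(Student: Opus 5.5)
Model a sequential strategy as an ordering $\pi = (\pi_1, \pi_2, \dots, \pi_B)$ of the candidate domain $\{0,\dots,B-1\}$. The solver queries $\pi_1$ first, then $\pi_2$, and so on. Adaptivity adds nothing here. The only feedback after a query is the single bit ``match / no match'' against $h$. Conditioned on no match so far, the next candidate is a fixed function of the previous ones, so every deterministic sequential strategy reduces to a fixed permutation. A randomized strategy is a mixture of permutations, and its expected cost is the corresponding average, so it is enough to beat every permutation.

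To keep the greedy comparison clean, I will first treat the noise-free idealization. The search for the true key $k$ then stops exactly when $k$ is queried. With $\pi^{-1}(j)$ denoting the position of candidate $j$ in $\pi$, the expected effort is
\[
\mathbb{E}[S_\pi] = \sum_{j=0}^{B-1} \Pr[K=j]\,\pi^{-1}(j).
\]
Noise could change this in two ways: the search could stop early on a false match, or miss the true key and fall through to $\bot$. In the oracle model, false matches occur independently of position with the same per-query probability, and the miss probability $1-q$ of the correct key does not depend on its index. Hence, up to these position-independent factors, the expected cost is still an increasing affine function of $\sum_j \Pr[K=j]\pi^{-1}(j)$. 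I will state the position-independence explicitly as an assumption inherited from Definition~\ref{def:puf-indist}, since it is what allows the reduction.

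The core step is an exchange argument, in the spirit of the rearrangement inequality. Suppose $\pi$ is not the identity. Then there are positions $a<b$ with $\pi_a=j > i=\pi_b$, and by Definition~\ref{def:decreasingD}, $\Pr[K=i]>\Pr[K=j]$. Swapping $\pi_a$ and $\pi_b$ changes the cost by
\[
\bigl(a\Pr[K=i]+b\Pr[K=j]\bigr)-\bigl(a\Pr[K=j]+b\Pr[K=i]\bigr) = -(b-a)\bigl(\Pr[K=i]-\Pr[K=j]\bigr) < 0.
\]
The cost therefore strictly decreases. Each swap also removes at least one inversion, so repeated swaps terminate at the identity ordering $0,1,\dots,B-1$. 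That ordering is exactly $\mathsf{PUZ.Solve}$, and the exchange argument shows it is the unique minimizer.

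I expect the main obstacle to be the noisy case rather than the exchange itself. One issue is making rigorous that false-positive early termination does not break the linearity in positions. Another is the subtlety that a strategy could re-query candidates after a miss, which is why the statement speaks of the ``first $B$ steps''. I would handle this by noting that before any match is observed, a repeated query on an already-tested candidate is dominated by querying an untested, more likely candidate. This lets me restrict attention to the first $B$ distinct queries and apply the exchange argument there.
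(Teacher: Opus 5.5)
Your noise-free argument is correct and is essentially the paper's proof. The paper writes $\mathbb{E}[S]=\sum_i (i+1)\Pr[K=k_i]$ for an arbitrary fixed order and invokes the Rearrangement Inequality; your inversion-removing swaps are just a proof of that inequality, with uniqueness as a by-product. Your reduction of adaptive and randomized strategies to fixed permutations fills in a step the paper takes for granted. Since the paper's proof also assumes the search stops exactly when $K$ is queried, that part of your proposal already establishes the lemma.

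Your noisy extension, however, rests on two claims that are false as stated.

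First, the cost is not affine in position. With $\gamma=1-\rho_{\mathrm{FP}}^{(1)}<1$, the expected cost when the true key sits at position $t$ is
\[
E(t)=\sum_{s=1}^{t}\gamma^{s-1}+\rho_{\mathrm{FN}}\sum_{s=t+1}^{B}\gamma^{s-2},
\]
which is affine in $t$ only when $\rho_{\mathrm{FP}}^{(1)}=0$. What does hold is $E(t+1)-E(t)=\gamma^{t-1}(\gamma-\rho_{\mathrm{FN}})$. So $E$ is strictly increasing exactly when $\rho_{\mathrm{FN}}<1-\rho_{\mathrm{FP}}^{(1)}$. In that case your swap changes the cost by $(E(a)-E(b))(\Pr[K=i]-\Pr[K=j])<0$. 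What you need is monotonicity of the per-position cost, not linearity.

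Second, discarding repeated queries is not justified by ``an untested candidate is more likely''. Once a set of candidates has been rejected, a rejected $j$ retains posterior weight proportional to $\rho_{\mathrm{FN}}\Pr[K=j]$, against $\gamma\Pr[K=i]$ for an untested $i$. For the truncated geometric with $\rho_{\mathrm{FN}}>\gamma(1-p)^{B-1}$, re-testing $0$ is therefore more likely to succeed than testing $B-1$. Moreover, a rejected response's distance to $h$ is soft information about whether that candidate was the key. So an adaptive strategy that may re-query is not captured by fixed permutations, and your ``single bit'' reduction fails under noise.

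This is the same kind of trade-off the paper treats separately, and only under an explicit condition, in Theorem~\ref{thm:noisy_greedy}. Either restrict your claim to the noise-free setting, as the paper's proof does, or carry such a condition.
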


\begin{proof}
Suppose a solver evaluates candidates in some arbitrary sequential order, say $(k_0, k_1, \dots, k_{B-1})$. The solver terminates successfully at exactly step $i$ if and only if the true key is $K = k_i$. Thus, the total expected solving effort $\mathbb{E}[S]$ is the inner product of the search step indices and their corresponding true key probabilities:
$$ \mathbb{E}[S] = \sum_{i=0}^{B-1} (i+1) \cdot \Pr[K = k_i] $$ 

By the Rearrangement Inequality, the sum of products of two sequences is minimized when the sequences are sorted in opposite monotonic directions. Since the sequence of step indices is inherently strictly increasing, minimizing the expectation $\mathbb{E}[S]$ requires the corresponding sequence of probabilities $(\Pr[K = k_0], \Pr[K = k_1], \dots, \Pr[K = k_{B-1}])$ to be strictly decreasing. Therefore, aligning the evaluation sequence $k_i$ exactly with the numerical sequential order $0, 1, \dots, B-1$ maps the highest probabilities to the smallest step indices. This perfectly satisfies the Rearrangement Inequality condition, achieving the global minimum expected solving effort among all possible sequential strategies.
\end{proof}

\paragraph{Optimal Decision Rule for Noisy Responses.}
For each enumerated candidate $\hat{k}$, the solver must statistically decide if the resulting noisy PUF response $\hat{r}$ matches the public helper data $h$. This decision is rigorously formulated as a binary hypothesis test.

\begin{lemma}[Optimality of Hamming Distance Thresholding]\label{lem:tau_optimal}
Assume a PUF noise model follows Def.~\ref{def:puf-indist} with independent bit-flip probability $e_p < 1/2$. Let $\hat{k}$ be a candidate key yielding response $\hat{r}$. Consider the hypotheses:
\begin{itemize}
    \item $H_0: \hat{k} = k$ (The candidate key is correct, $\hat{r}$ is a noisy evaluation of $r$)
    \item $H_1: \hat{k} \ne k$ (The candidate key is incorrect, $\hat{r}$ is uniformly random)
\end{itemize}
The optimal Neyman-Pearson likelihood-ratio test for distinguishing $H_0$ from $H_1$ is mathematically equivalent to a simple Hamming distance threshold test $d_H(h, \hat{r}) \le \tau \cdot m$ for some threshold parameter $\tau \in (0, 1)$.
\end{lemma}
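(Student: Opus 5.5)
The plan is to write down the two likelihoods explicitly under the noisy oracle of Def.~\ref{def:puf-indist}, and then show that their ratio is a strictly monotone function of the Hamming distance. Fix the helper data $h$ and let $d = d_H(h,\hat{r})$.

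Under $H_0$, the candidate yields the same challenge as generation. The second query returns $h \oplus \varepsilon_r$ with $\varepsilon_r \sim \text{Bernoulli}(e_p)^m$, using the symmetry remark that attributes all noise to the solving query. Hence
\[
p_0(\hat{r}) = e_p^{\,d}(1-e_p)^{m-d}.
\]
Under $H_1$, the challenge $\mathrm{Hash}(\hat{k}\parallel\ell)$ differs from $c$ except with negligible probability, by collision resistance. It is therefore a fresh (or independent) entry of the oracle list, and its response is uniform, so
\[
p_1(\hat{r}) = 2^{-m},
\]
independent of $d$. The collision event will be absorbed as negligible, or simply excluded by the modeling assumption already stated in the hypothesis $H_1$.

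Next I form the likelihood ratio
\[
\Lambda(\hat{r}) = \frac{p_0(\hat{r})}{p_1(\hat{r})} = 2^m (1-e_p)^m \left(\frac{e_p}{1-e_p}\right)^{d}.
\]
Since $e_p<1/2$, the base $e_p/(1-e_p)$ lies in $[0,1)$, so $\Lambda$ is strictly decreasing in $d$. This is the key step. Taking logs gives
\[
\log\Lambda = m\log\bigl(2(1-e_p)\bigr) - d\log\frac{1-e_p}{e_p},
\]
which is affine in $d$ with negative slope. The degenerate case $e_p=0$ is treated separately: the test becomes $d=0$, which is the threshold $\tau m<1$.

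By the Neyman--Pearson lemma, for any significance level the most powerful test rejects $H_0$ exactly when $\Lambda(\hat{r}) < \eta$ for some $\eta$. Solving the log inequality gives $d > d^\ast$, with
\[
d^\ast = \frac{m\log\bigl(2(1-e_p)\bigr) - \log\eta}{\log\bigl((1-e_p)/e_p\bigr)}.
\]
Equivalently, the test accepts iff $d_H(h,\hat{r}) \le \tau m$ with $\tau = d^\ast/m$.

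The main obstacle is bookkeeping rather than depth, and there are three pieces. First, $\tau$ must land in $(0,1)$: for meaningful levels $\eta$ between the extreme values of $\Lambda$, the threshold $d^\ast$ lies in $[0,m)$, so $\tau$ can be chosen in $(0,1)$. Thresholds outside this range give trivial always-accept or always-reject tests, and these are excluded. Second, because $d$ is integer-valued, any $\tau$ in an interval gives the same test. Third, exact level-$\alpha$ tests need randomization at the boundary $d=\lfloor d^\ast\rfloor$. I will note that the non-randomized threshold tests are precisely the Neyman--Pearson tests at the attainable levels, and that this suffices for the claimed equivalence.
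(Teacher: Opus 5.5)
Your proposal is correct and follows the same route as the paper. Both proofs write $\Lambda(\hat{r}) = 2^m(1-e_p)^m\bigl(e_p/(1-e_p)\bigr)^{d_H(h,\hat{r})}$, observe that it is strictly decreasing in the Hamming distance because $e_p<1/2$, and conclude via the Neyman--Pearson lemma that the likelihood-ratio test is a threshold test on $d_H(h,\hat{r})$. Your extra bookkeeping (hash collisions under $H_1$, the degenerate case $e_p=0$, placing $\tau$ in $(0,1)$ using the integrality of $d$, and boundary randomization) is more careful than the paper's proof but does not change the argument.
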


\begin{proof}
Under $H_0$, $\hat{r}$ deviates from $h$ purely due to i.i.d. bit flips with probability $e_p$. Under $H_1$, $\hat{r}$ is drawn uniformly from $\{0, 1\}^m$. The likelihood ratio is:
$$ \Lambda(\hat{r}) = \frac{\Pr(\hat{r} | H_0)}{\Pr(\hat{r} | H_1)} = \frac{(\frac{e_p}{1-e_p})^{d_H(h, \hat{r})} \cdot (1 - e_p)^m}{2^{-m}} $$

By the Neyman-Pearson lemma, the optimal decision rule accepts $H_0$ if $\Lambda(\hat{r}) \ge \eta$ for a chosen target boundary $\eta$. Since $e_p < 1/2$, the ratio $\frac{e_p}{1-e_p} < 1$. Consequently, $\Lambda(\hat{r})$ is strictly monotonically decreasing with respect to $d_H(h, \hat{r})$. Therefore, evaluating $\Lambda(\hat{r}) \ge \eta$ is strictly equivalent to bounding the Hamming distance below a threshold $\tau \cdot m$.
\end{proof}

Combining Lemma~\ref{lem:greedy_opt} and Lemma~\ref{lem:tau_optimal}, we formally conclude that any optimal puzzle solver, whether legitimate or adversarial, must logically follow the exact same solving strategy: performing a greedy sequential enumeration of candidates $\hat{k}$ together with a Hamming distance threshold test.

\subsection{Failure Probability Analysis}\label{sec:failure}
The solving procedure $\mathsf{PUZ.Solve}$ may fail to recover the correct key $k$ in two distinct ways: False Negatives (FN) and False Positives (FP). The decoding failure probability rigorously quantifies this incorrectness, defining how likely the solver is to output an incorrect key or to fail to output any key at all. We bound these two specific failure scenarios separately by $\rho$ and $\rho_{\bot}$, respectively.

\begin{definition}[Decoding Failure Bound] \label{def:rho} 
Let $\mathsf{PUZ}$ be a PUF-based time-lock puzzle with corresponding induced random variables $(K,H,G,\hat{K},S)$.
We say $\mathsf{PUZ}$ has a decoding failure probability tightly bounded by $\rho$ and $\rho_{\bot}$ if, for every helper data $h$:
\begin{align*}
\Pr[\hat{K} \neq K, \hat{K}\neq \bot \mid H=h] &\leq \rho, \quad \text{and} \\
\Pr[\hat{K}= \bot \mid H=h] &\leq \rho_{\bot}.
\end{align*}
Equivalently, for all $\ell$, the total failure probability is:
\begin{equation*}
1 - \frac{\sum_k \Pr\Biggl[\begin{aligned}
  &(k,h;\cdot)\leftarrow \mathsf{PUZ.Gen}(\ell), \\
  &(k;\cdot)\leftarrow \mathsf{PUZ.Solve}(\ell, h)
\end{aligned}\Biggr]}{\Pr[(\cdot,h;\cdot)\leftarrow \mathsf{PUZ.Gen}(\ell)]}
\end{equation*}
\end{definition}

\paragraph{FN Probability of $\mathsf{PUZ}$.}
Let $e_p$ denote the independent per-bit flip probability of the noisy PUF oracle model, as formulated in Definition~\ref{def:puf-indist}.
For a fixed PUF challenge, the Hamming distance between two independent evaluations inherently follows the binomial distribution \( \text{Binom}(m, e_p) \). The FN probability of $\mathsf{PUZ}$, denoted as $\rho_{\mathrm{FN}}$, represents the probability that the \emph{correct} round key candidate is rejected. This occurs exclusively when the evaluation noise exceeds the threshold, i.e., $d_H(\hat{r}, h) > \tau \cdot m$.
\[
\rho_{\mathrm{FN}} = \Pr[d_H(\hat{r}, h) > \tau \cdot m] = \sum_{j =  \lfloor \tau\cdot m \rfloor +1}^{m} \binom{m}{j}{e_p}^j (1-{e_p})^{m-j},
\]
Using Hoeffding's inequality, for any $\tau > e_p$ we have 
\( \rho_{\mathrm{FN}} \leq \exp\!\left(-2m(\tau-e_p)^2\right) \), which decays exponentially in $m$.

\paragraph{FP Probability of $\mathsf{PUZ}$.}
The probability that a uniformly random vector $\hat{r}$ (generated by evaluating an incorrect key candidate) accidentally falls inside the acceptable threshold region around the target helper data $h$ is:
\begin{align*}
\rho_{\mathrm{FP}}^{(1)} = \Pr\left[ d_H(\hat{r}, h) \le \tau \cdot m \right] = \frac{1}{2^m}\sum_{j=0}^{\lfloor\tau \cdot m\rfloor}\binom{m}{j}.
\end{align*}
By Hoeffding's inequality, for any threshold $\tau < 1/2$, we have \( \rho_{\mathrm{FP}}^{(1)} \le \exp\!\left(-2m(1/2-\tau)^2\right)\), which also decays exponentially with respect to the response length $m$.
During the $\mathsf{PUZ.Solve}$ enumeration process, the solver may iteratively test up to $B-1$ incorrect round key candidates. An incorrect key could be falsely accepted either due to PUF noise falling within the threshold or due to a hash collision with probability $\rho_{\mathrm{coll}}$ when mapping candidates to challenges. Applying a standard union bound, the overall FP probability across all possible incorrect candidates is strictly bounded by:
\[
\begin{split}
\rho_{\mathrm{FP}} &\le (B-1) \cdot (\rho_{\mathrm{FP}}^{(1)}+\rho_{\mathrm{coll}}) \\
&\le (B-1) \bigl(\exp\!\bigl(-2m(1/2-\tau)^2\bigr)+\rho_{\mathrm{coll}}\bigr).
\end{split}
\]
For a basic puzzle $\mathsf{PUZ}$, $\rho = (B-1) (\exp\!\big(-2m(1/2-\tau)^2\big)+\rho_{\mathrm{coll}})$, and $\rho_\bot = \exp\!\left(-2m(\tau-e_p)^2\right)$.

\subsection{Solving Delay Analysis}\label{sec:solvetimeanalysis}

The distribution of the random variable $S$, which represents the solving delay of a puzzle, is defined below.  
\begin{definition}[Solving-Cost Distribution $S$]\label{def:S}
For any PPT algorithm $\mathcal{A}$ possessing oracle access to $\mathsf{PUF}$, we define $S[\mathcal{A}^{\mathsf{PUF}}(h)]$ as the induced solving-cost distribution produced by $\mathcal{A}$ when given the public helper data $h$. Formally:
\begin{multline*}
\Pr[s \gets S[\mathcal{A}^{\mathsf{PUF}}(h)] \mid H=h] \\
= \Pr\bigl[(\cdot; s)\gets \mathcal{A}^{\mathsf{PUF}}(h) \mid H=h\bigr].
\end{multline*}
\end{definition}

\paragraph{Solving-Time Windows.}
A window $[\alpha,\beta]$ statistically bounds the solving-time period during which a legitimate solver will confidently complete the puzzle with probability at least $1-\delta$. When this boundary threshold is strictly satisfied, the puzzle construction successfully guarantees a $([\alpha,\beta],\delta)$-Solving-Time Window property.

\begin{definition}[Solving Time Window]\label{def:window_general}
Let $\mathsf{PUZ}$ be a PUF-based puzzle with corresponding induced random variables $(K,H,G,\hat{K},S)$, and let $\mathcal{A}^{\mathsf{PUF}}$ be a PPT solving algorithm provided with public helper value $h$. We say $\mathcal{A}^\mathsf{PUF}$ has a $([\alpha,\beta],\delta)$ solving-time window conditional upon success if, for all $h$ where $\Pr[H=h]>0$:
\[
\Pr\big[s\in [\alpha , \beta], s \gets S[\mathcal{A}^{\mathsf{PUF}}(h)] \mid \hat{K}=K, H=h \big] \;\ge\; 1-\delta,
\]
where $\delta$ bounds the window's concentration error.
\end{definition}

\paragraph{The Distribution of Key $K$.}\label{sec:ki_dist}
The precise selection of the sampling distribution $\mathcal{D}$ for the key $K$ directly dictates how tightly the expected solving time remains concentrated around its mean. Thus, assigning a carefully tailored distribution for the puzzle keys is critical.

As established in Lemma~\ref{lem:greedy_opt}, $\mathcal{D}$ is required to be a strictly decreasing distribution to ensure the solver optimality. %
A natural choice capturing these requirements is the \emph{Geometric Distribution}, where each guess succeeds independently with probability $p$. The corresponding probability mass function is simply given by:
\begin{align}
    \Pr[K = k] = p(1-p)^k, \qquad \text{for } k \ge 0.
\end{align}
This distribution strictly satisfies the decreasing property (Definition~\ref{def:decreasingD}) because for any $k_1 < k_2$, $(1-p)^{k_1} > (1-p)^{k_2}$ (for $0 < p < 1$), meaning $p(1-p)^{k_1} > p(1-p)^{k_2}$.

However, an unbounded geometric distribution is not suitable for physical hardware implementation, as real-world puzzle instances strictly demand a finite bounded key domain $k \in \{0, 1, \ldots, B-1\}$. Therefore, we define $\mathcal{D}$ as a \emph{Truncated Geometric Distribution} with its support strictly cutting off at boundary $B$. To safely preserve the exponentially decreasing structure of the distribution while facilitating clean on-device implementation, we employ a standard rejection-sampling method during the puzzle generation routine: repeatedly sample independent keys $K \sim \mathrm{Geom}(p)$, returning the first output satisfying $K < B$. This straightforwardly produces the conditional probability distribution:
\begin{equation}\label{eq:trunc-geom}
\begin{split}
\tilde{\Pr}[K = k] &= \Pr[K = k \mid K < B] \\
&= \frac{p(1-p)^{k}}{1 - (1-p)^B}, \quad \text{for } 0 \le k \le B - 1.
\end{split}
\end{equation}
This truncated geometric distribution also satisfies the decreasing property of Definition~\ref{def:decreasingD}. Thus, it is fully compatible with the optimal greedy solving strategy. Moreover, its bounded support allows us to safely apply standard concentration inequalities (e.g., Hoeffding's inequality) to bound the total solving time $S$.

Suppose ${K}$ follows the truncated geometric distribution in \eqref{eq:trunc-geom}. Let $a := (1-p)^B$. The mean and variance of the truncated geometric distribution are explicitly given by:
\begin{align}
    \mathbb{E}[K] &=\frac{1-p}{p} - \frac{B a}{1-a} \label{eq:trunc-mean}  \\ 
    \text{Var}[K] &=\mathbb{E}[K^2] - (\mathbb{E}[K])^2 = \frac{1-p}{p^2} - \frac{B^2 a}{(1-a)^2} \label{eq:trunc-var}
\end{align}
The moments of the truncated distribution can be expressed as the moments of the unbounded geometric distribution minus an exact analytical correction term. Both the mean and the variance are strictly decreased by truncation, with the magnitude of reduction controlled by the tail mass parameter $a = (1-p)^B$. When $a \ll 1$, the right-hand correction terms become negligible, and the truncated distribution tightly approximates its unbounded counterpart.

\begin{theorem}[Solving-Time Window of $\mathsf{PUZ}$]\label{thm:basic-window-hoeffding}
Let $\mathsf{PUZ}$ be the basic PUF-based puzzle with a key drawn from the truncated geometric distribution~\eqref{eq:trunc-geom} with parameter $p \in (0,1)$ and support $\{0, \ldots, B-1\}$. Then, conditioned on correct key recovery ($\hat{K} = K$), $\mathsf{PUZ.Solve}$ has a $([\alpha, \beta], \delta)$-solving-time window (Definition~\ref{def:window_general}) for any $\delta \in (0,1)$, where
\begin{align}
\alpha &= \mathbb{E}[S] - \Delta(\delta) = (1+\mathbb{E}[K]) - (B-1)\sqrt{\frac{\ln(2/\delta)}{2}}, \label{eq:alpha-hoeffding} \\
\beta  &= \mathbb{E}[S] + \Delta(\delta) = (1+\mathbb{E}[K]) + (B-1)\sqrt{\frac{\ln(2/\delta)}{2}}. \label{eq:beta-hoeffding}
\end{align}
\end{theorem}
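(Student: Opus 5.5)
Conditioned on correct key recovery ($\hat{K}=K$), $\mathsf{PUZ.Solve}$ stops exactly at the iteration $\hat{k}=K$ and returns cost $\hat{k}+1$. So the first step is to identify the solving cost deterministically as $S = K+1$ on the event $\hat{K}=K$. By linearity of expectation this gives $\mathbb{E}[S]=1+\mathbb{E}[K]$, with $\mathbb{E}[K]$ given explicitly by \eqref{eq:trunc-mean}. This matches the centre of the window in \eqref{eq:alpha-hoeffding}--\eqref{eq:beta-hoeffding}.

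One subtlety needs care: conditioning on $\hat{K}=K$ and $H=h$ may distort the distribution of $K$ relative to \eqref{eq:trunc-geom}. Two things must be checked.
\begin{itemize}
\item \emph{Effect of $H=h$.} In the PUF oracle model (Definition~\ref{def:puf-indist}), $h$ is a fresh uniform string independent of $K$.
\item \emph{Effect of $\hat{K}=K$.} Success requires that no false positive occurs among the candidates $0,\dots,K-1$ and that the correct candidate is not rejected. This event depends on $K$ only through the number of earlier candidates tested, so it can tilt the law of $K$ slightly.
\end{itemize}
Rather than track this tilt, I will make the argument robust to it. The Hoeffding-type bound used below only needs that, under the conditional law, $S$ is a bounded random variable in $[1,B]$, with range $B-1$, centred at the stated mean. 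I will either argue that the tilt is negligible (bounded by $\rho$) or simply take $\mathbb{E}[S]$ to mean the conditional mean. The latter is the reading consistent with the statement's notation $\mathbb{E}[S]$.

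The concentration step is where I expect the friction, because Hoeffding's inequality for a single sample is essentially vacuous. For one variable $X\in[a,b]$ it gives $\Pr[|X-\mathbb{E}X|\ge t]\le 2\exp(-2t^2/(b-a)^2)$. Setting this equal to $\delta$ yields $t=(B-1)\sqrt{\ln(2/\delta)/2}=\Delta(\delta)$, which is exactly the claimed half-width. This is the route I will take: apply the $n=1$ case of Hoeffding to $S\in[1,B]$, conditioned on $\hat{K}=K, H=h$. The conclusion is $\Pr[S\notin[\alpha,\beta]]\le\delta$, which is the $([\alpha,\beta],\delta)$ window of Definition~\ref{def:window_general}.

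I will close with a remark. Whenever $\Delta(\delta)\ge B-1$, which is the case for every $\delta$ with $\sqrt{\ln(2/\delta)/2}\ge1$, the window already contains all of $[1,B]$ and the claim is trivially true. For the remaining $\delta$ the inequality is a genuine, if loose, bound. This looseness motivates the composite construction, where Hoeffding over many independent puzzles gives $\sqrt{n}$ rather than $n$ scaling.
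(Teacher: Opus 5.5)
Your proposal is correct and matches the paper's proof: the paper also writes $S=1+K\in[1,B]$, applies single-variable Hoeffding with range $B-1$, and solves $2\exp\!\bigl(-2\Delta^2/(B-1)^2\bigr)=\delta$ for $\Delta(\delta)$. The paper does not address the tilt in the law of $K$ caused by conditioning on $\hat{K}=K$; it silently uses the unconditional mean from \eqref{eq:trunc-mean}. Your discussion of that tilt, and your observation that the window is vacuous for $\delta\le 2/e^2$, therefore go beyond the paper's argument.
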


\begin{proof}
The solving cost $S = 1+K$ is bounded in $[1, B]$, with range $B-1$. Applying Hoeffding's inequality:
\[
\Pr\!\big[\,|S - \mathbb{E}[S]| > \Delta\,\big]
\le
2\exp\!\left(-\frac{2\Delta^2}{(B-1)^2}\right)
\le \delta,
\]
we get \begin{equation*}
\Delta(\delta) = (B-1)\sqrt{\frac{\ln(2/\delta)}{2}}. \qedhere
\end{equation*}
\end{proof}

\subsection{Efficiency Analysis}\label{sec:efficiency}

\begin{definition}[PUF-based Puzzle Efficiency]\label{def:puz_efficiency}
Let $\mathsf{PUZ}=(\mathsf{PUZ.Gen},\mathsf{PUZ.Solve})$ be a PUF-based time-lock puzzle scheme. 
Let the induced random variables $(K,H,G,\hat{K},S)$ \emph{correspond to} $\mathsf{PUZ}$.
The \emph{efficiency gap} of $\mathsf{PUZ}$, denoted by $\mu$, is formally defined as the ratio of expected solving effort to expected generation effort:
\[
\mu \;:=\; \frac{\mathbb{E}[\,S \mid \hat{K}=K \,]}{\mathbb{E}[\,G \,]}.
\]
\end{definition}
A large efficiency parameter $\mu$ indicates that recovering a hidden key requires disproportionately more physical PUF evaluations than initially generating the puzzle.

In our basic puzzle construction, the $\mathsf{PUZ.Gen}$ algorithm calls the PUF exactly once to evaluate $h \gets \mathsf{PUF}(c)$. The $\mathsf{PUZ.Solve}$ algorithm sequentially evaluates candidate keys starting from $\hat{k} = 0$. Conditioned on succeeding exactly at the true key $K$ (i.e., no false positives accepted early), the solver evaluates exactly $K + 1$ candidates, requiring $K + 1$ PUF queries. Thus, $\mathbb{E}[S \mid \hat{K} = K] = \mathbb{E}[K] + 1$. 

\begin{align}
\mu &= \mathbb{E}[K] + 1 = \frac{1}{p} - \frac{B(1-p)^B}{1 - (1-p)^B} 
\end{align}
This result shows that the efficiency gap scales inversely with $p$ (for sufficiently large $B$).%

\section{Puzzle Composition \texorpdfstring{$\mathsf{PUZ^n}$}{PUZn}}\label{sec:composition_puz}
\subsection{Construction}
To improve the concentration bounds of the solving delay and to reduce the failure rate due to the PUF noises, we propose sequentially composing multiple independent basic puzzles to formulate a larger, more robust composite structure. Let \(n\) denote the total number of puzzle rounds. By defining the key domain size as $B = 2^q$, each internal round key \(k_i\) natively maps to an element within the Galois field \(\mathbb{F}_{2^q}\). During execution, each independent puzzle round yields a potentially noisy candidate key $\hat{k}_i$ evaluated via the basic puzzle $\mathsf{PUZ}$. The final key is subsequently reconstructed by passing these candidates through a Reed-Solomon code to tolerate PUF errors, followed by a final cryptographic hash.

\paragraph{Composite Puzzle Generation $\mathsf{PUZ^n.Gen}$.} 
The generation algorithm concatenates $n$ independent basic puzzle instances. Every basic puzzle instance is strictly independent from one another because of the unique identifier $i$. This process generates a vector of round keys $\vect{k} = (k_1, \dots, k_n)$ along with their corresponding helper data responses $\vect{r} = (r_1, \dots, r_n)$. A Reed-Solomon encoder adds $2\omega$ parity symbols $\vect{p} = (p_1, \dots, p_{2{\omega}})$ over \(\mathbb{F}_{2^q}\) which allows the correction of up to \(\omega\) errors or \(2\omega\) erasures. The final secret key \(\mathsf{key}\) is derived by hashing the concatenated key vector.
$$
\begin{array}{rl}
\multicolumn{2}{l}{\mathsf{PUZ^n.Gen}(\ell):} \\
1. & \mathrm{For}\ i = 1 \text{ to } n \ \mathrm{do:} \\
2. & \quad (k_i, r_i; 1) \gets \mathsf{PUZ.Gen} (\ell \parallel i) \\
3. & \vect{k} \gets (k_1, \dots, k_n) \\
4. & (p_1, \dots, p_{2{\omega}}) \gets \mathsf{RS.Encode}(\vect{k}) \\
5. & \vect{h} \gets (r_1, \dots, r_n, p_1, \dots, p_{2{\omega}}) \\
6. & \mathsf{key} \gets \mathrm{Hash}(k_1 \parallel \dots \parallel k_n) \\
7. & \mathrm{Return}\ (\mathsf{key}, \vect{h} ; n)
\end{array}
$$

The output of $\mathsf{PUZ^n.Gen}(\ell)$ constitutes the required puzzle components. A complete physical time-lock puzzle can encrypt a solution message $msg$ using the derived key and compute a hash of the key $\mathsf{key}$ as $z \gets (\mathsf{Enc}_{\mathsf{key}}(msg), \mathsf{Hash}(\mathsf{key}))$. The final combined puzzle data released to the solver is $(z, \vect{h})$. $\mathsf{Hash}(\mathsf{key})$ functions as an integrity check validating the final recovered solution.

\paragraph{Composite Puzzle Solving $\mathsf{PUZ^n.Solve}$.}
Given the helper data $\vect{h}$, an integrated solver sequentially reconstructs the secret keys for each round. If a round solver fails to find a valid key candidate within the allowed Hamming distance $\tau \cdot m$, it returns the erasure symbol $\bot$. The recovered round keys, containing errors and erasures, are passed alongside the parity symbols to the Reed-Solomon decoder. Finally, the corrected keys are hashed, producing the final key $\mathsf{\hat{key}}$. 

$$
\begin{array}{rl}
\multicolumn{2}{l}{\mathsf{PUZ^n.Solve}(\ell, \vect{h}):} \\
1. & \mathrm{Parse}\ \vect{h} \text{ as } (r_1, \dots, r_n, p_1, \dots, p_{2{\omega}}) \\
2. & \mathrm{For}\ i = 1 \text{ to } n \ \mathrm{do:} \\
3. & \quad (\hat{k}_i; s_i) \gets \mathsf{PUZ.Solve}(\ell \parallel i, r_i) \\
4. & (\tilde{k}_1, \dots, \tilde{k}_n) \gets \mathsf{RS.Decode}(\hat{k}_1, \dots, \hat{k}_n, p_1, \dots, p_{2{\omega}}) \\
5. & \mathrm{If}\ \mathsf{RS.Decode}\ \text{fails, }\mathrm{Return}\ (\bot ; \sum_{i=1}^n s_i) \\
6. & \hat{\mathsf{key}} \gets \mathrm{Hash}(\tilde{k}_1 \parallel \dots \parallel \tilde{k}_n) \\
7. & \mathrm{Return}\ (\hat{\mathsf{key}} ; \sum_{i=1}^n s_i)
\end{array}
$$

If a final key $\hat{\mathsf{key}}$ is successfully returned, the solver parses $z$ and performs the integrity check $\mathsf{Hash}(\hat{\mathsf{key}}) \stackrel{?}{=} \mathsf{Hash}(\mathsf{key})$. Upon verification success, the solver transparently recovers the target message $msg = \mathsf{Dec}_{\hat{\mathsf{key}}}(\mathsf{Enc}_{\mathsf{key}}(msg))$. By modeling $\mathsf{Hash}$ as a random oracle, we assume any malicious solver using digital computation for additional help cannot bypass the basic puzzle solving by merely inverting $\mathsf{Hash}(\mathsf{key})$.

Unlike traditional software-based time-lock puzzles that tightly chain sequential outputs to delay solvers~\cite{abadi2021multi,abadi2025scalable}, our framework generates each basic puzzle completely independently. This intentional architectural choice isolates noise and minimizes error propagation between rounds, while strategically binding all puzzles sequentially to the strict hardware bottleneck of a single PUF chip. Furthermore, the usage of unique round identifiers effectively partitions the local search spaces so that resolving one puzzle round yields zero information for accelerating subsequent rounds.

\subsection{Optimal Solving Strategy for \texorpdfstring{$\mathsf{PUZ^n}$}{PUZn}}\label{sec:opt_solving_comp}
Lemma~\ref{lem:greedy_opt} establishes that, within a single isolated round, greedy sequential enumeration from $0$ to $B-1$ is uniquely optimal when the round key $K$ follows a decreasing distribution. In the composite puzzle $\mathsf{PUZ^n}$, however, the solver must simultaneously manage $n$ independent rounds, each requiring the successful recovery of a distinct key $k_i$. Crucially, due to the Reed-Solomon error-correction layer, not every individual round must be successfully solved for the overall decryption to succeed.

This introduces a natural strategic question: should the solver fully complete each round before advancing to the next, or interleave candidate evaluations across rounds to opportunistically reduce overall expected effort? The key complication arises from the inherent stochastic noise of physical PUF evaluations. Even when the true key is tested, the reconstructed response may fail the Hamming distance check, producing a false negative. Once the true key has been silently rejected, all further search effort within that round is wasted. A natural mitigation is to prematurely abandon partially-explored rounds and restart fresh ones, trading depth for breadth.

The theorem below formally characterizes the precise condition under which the simpler sequential greedy strategy, completing the current round exhaustively before advancing, is provably optimal, eliminating the need for such premature abandonment.

\begin{theorem}[Condition for Sequentially Greedy Optimality under Noise]\label{thm:noisy_greedy}
Let $\mathsf{PUZ^n}$ be a composite puzzle where each round key is drawn from the truncated geometric distribution (Eq.~\ref{eq:trunc-geom}) with parameter $p$ and cutoff $B$. Let $\rho_{\text{FN}}$ and $\rho_{\text{FP}}^{(1)}$ be the false negative and false positive probabilities per candidate evaluation, respectively. Suppose that the number of false negatives plus twice the number of false positives falls within the RS code decoding capacity. The sequentially greedy strategy (completing the current round before starting the next) strictly minimizes the expected total solving effort if and only if $\rho_{\text{FN}} < (1-p)^B (1-\rho_{\text{FP}}^{(1)})$.
\end{theorem}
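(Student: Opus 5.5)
The plan is to reduce the multi-round scheduling question to a comparison of \emph{per-query success rates} (hazard rates), and then argue by a pairwise interchange in the spirit of the Rearrangement Inequality argument of Lemma~\ref{lem:greedy_opt}. By the hypothesis that false negatives plus twice the false positives stay within the RS decoding capacity, every round that is either solved or abandoned contributes a symbol that the decoder can handle. The solver's objective is therefore simply to obtain the required number of round outcomes at minimum expected total number of PUF queries. Rounds are independent because of the labels $\ell\parallel i$. Each query costs one unit. The only decision at any moment is whether to test the next candidate in the current, partially explored round or to open a fresh round.

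The first step is to compute the posterior state of a round after candidates $0,\dots,j-1$ have all been rejected. Using the truncated geometric prior~\eqref{eq:trunc-geom}, a candidate $i<j$ can be the true key only if it suffered a false negative, which gives weight $\propto p(1-p)^i\rho_{\mathrm{FN}}(1-\rho_{\mathrm{FP}}^{(1)})^{j-1}$. A candidate $i\ge j$ carries weight $\propto p(1-p)^i(1-\rho_{\mathrm{FP}}^{(1)})^{j}$, since all $j$ earlier tests were correct rejections of wrong keys. From these weights I will write the hazard $\lambda_j$, the probability that the next query (candidate $j$) terminates the round correctly. I will compare $\lambda_j$ with the hazard $\lambda_0=p(1-\rho_{\mathrm{FN}})/(1-a)$ of a fresh round, where $a=(1-p)^B$. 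The next step is to show that $\lambda_j$ is monotone in $j$, so that the least favourable continuation point is the deepest one, $j=B-1$. There the remaining prior mass of the current round is essentially the truncation tail, and this is where the factor $a=(1-p)^B$ enters.

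The third step is the interchange argument. Suppose some strategy abandons a round at a point where continuing has hazard at least that of a fresh round. Swapping the next query to the current round in front of the fresh-round query does not increase the expected cost, by the same rearrangement reasoning as in Lemma~\ref{lem:greedy_opt}: higher success probability should be assigned to earlier indices. Iterating this swap transforms any strategy into the sequentially greedy one. This gives the ``if'' direction: when $\rho_{\mathrm{FN}}<(1-p)^B(1-\rho_{\mathrm{FP}}^{(1)})$, the inequality $\lambda_j>\lambda_0$ holds for all $j$, and the strict inequality yields strict minimality. For the ``only if'' direction, I will exhibit the opposite inequality at $j=B-1$ when the condition fails. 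In that case, abandoning at that point and querying a fresh round instead strictly lowers the expected effort, so greedy is not optimal.

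I expect the main obstacle to be the algebra that turns ``$\lambda_{B-1}\ge\lambda_0$'' into exactly $\rho_{\mathrm{FN}}<(1-p)^B(1-\rho_{\mathrm{FP}}^{(1)})$, together with the proof that $\lambda_j$ is monotone. Both sides involve the normalising sums of geometric series, which must be handled carefully. My first attempt will be to clear denominators and compare the two hazards term by term. The aim is to show that the difference factorises as a positive quantity times $\bigl((1-p)^B(1-\rho_{\mathrm{FP}}^{(1)})-\rho_{\mathrm{FN}}\bigr)$. If monotonicity proves awkward, I will fall back on a one-step comparison: compare continuing for one more query against switching to a fresh round. This one-step rule is sufficient for optimality because the fresh-round state is the same in every round.
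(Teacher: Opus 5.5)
Your plan is the paper's proof in substance. Your posterior weights are the two disjoint scenarios the paper uses for $\Pr[F_{<i}]$, and your hazard $\lambda_j$ is the paper's $f(j)$. The paper also passes from ``continuing has a higher conditional success probability than a fresh round'' to optimality by the same informal hazard comparison. The only difference in execution is that the paper substitutes $x=(1-p)^i$, writes $f$ as a linear-fractional function of $x$, and reads off that $\partial f/\partial x$ has the sign of $\rho_{\mathrm{FN}}-\gamma a$, where $\gamma=1-\rho_{\mathrm{FP}}^{(1)}$ and $a=(1-p)^B$.

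One step of your plan is backwards, however. Under the hypothesis $\rho_{\mathrm{FN}}<\gamma a$ the hazard strictly \emph{increases} with depth; that is exactly what the condition encodes. So $j=B-1$ is the \emph{most} favourable continuation point, not the least, and your ``deepest is worst'' picture holds only when the condition fails. Showing $\lambda_{B-1}>\lambda_0$ and then appealing to monotonicity would therefore not give $\lambda_1>\lambda_0$, and the decreasing monotonicity you implicitly intend is false in that regime. What rescues the argument is that the comparison with a fresh round has the same sign at every depth. Clearing denominators gives
\[
\lambda_j-\lambda_0=\frac{p\,(1-\rho_{\mathrm{FN}})\,\bigl(1-(1-p)^j\bigr)\,\bigl(\gamma a-\rho_{\mathrm{FN}}\bigr)}{(1-a)\bigl[\gamma\bigl((1-p)^j-a\bigr)+\rho_{\mathrm{FN}}\bigl(1-(1-p)^j\bigr)\bigr]},
\]
and the denominator is positive for $1\le j\le B-1$. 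So do this algebra for general $j$ rather than at $j=B-1$. The factor $a$ enters at every depth, through $\Pr[K\ge j]\propto(1-p)^j-a$, not only via the tail at the last candidate.

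Two smaller points. First, at the boundary $\rho_{\mathrm{FN}}=\gamma a$ every $\lambda_j$ equals $\lambda_0$, so abandoning only ties rather than strictly improving; that is all the failure of \emph{strict} optimality needs, so phrase the ``only if'' direction that way. Second, your one-step comparison against $\lambda_0$ suffices because the hazard sequence is monotone (it is linear-fractional in $(1-p)^j$), not merely because fresh rounds are identical. With a non-monotone hazard profile, a myopic comparison would not settle optimality.
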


\begin{proof}
Consider a solver that has sequentially tested and rejected candidates $0, 1, \dots, i-1$ within round $j$. Let $F_{<i}$ denote this event. For the solver to legitimately reach candidate $i$ without halting, it must have correctly rejected every earlier candidate. There are exactly two disjoint reasons this can occur:
\begin{enumerate}
    \item The true key $K_j \ge i$, and accordingly all $i$ earlier non-key candidates were correctly rejected. The probability of this scenario is $\Pr[K_j \ge i] \cdot (1-\rho_{\text{FP}}^{(1)})^i$.
    \item The true key $K_j < i$, but it was silently rejected by a false negative, while all $i-1$ remaining non-key candidates were also correctly rejected. The probability of this scenario is $\Pr[K_j < i] \cdot \rho_{\text{FN}} \cdot (1-\rho_{\text{FP}}^{(1)})^{i-1}$. 
\end{enumerate}

For notational convenience, define $\gamma = 1-\rho_{\text{FP}}^{(1)}$. The total probability of the rejection event $F_{<i}$ is therefore:
\[
\Pr[F_{<i}] = \frac{(1-p)^i - (1-p)^B}{1 - (1-p)^B} \gamma^i + \frac{1 - (1-p)^i}{1 - (1-p)^B} \rho_{\text{FN}} \gamma^{i-1}.
\]

The solver's optimal choice at depth $i$ is governed by the conditional probability of successfully finding the true key on exactly the \emph{next} candidate. This probability is:
\[
\Pr[K_j = i \cap \text{Accept } i] = \frac{p(1-p)^i}{1 - (1-p)^B} \gamma^i (1 - \rho_{\text{FN}}).
\]

The ratio of these two quantities is the conditional probability of success on step $i$:
\[
\begin{split}
f(i) &= \frac{\Pr[K_j = i \cap \text{Accept } i]}{\Pr[F_{<i}]} \\
&= \frac{p \gamma (1 - \rho_{\text{FN}}) (1-p)^i}{\bigl((1-p)^i - (1-p)^B\bigr)\gamma + \rho_{\text{FN}}\bigl(1 - (1-p)^i\bigr)}.
\end{split}
\]

To determine whether the solver should continue in the current round, we analyze whether the conditional success probability $f(i)$ strictly increases as depth $i$ grows. If $f(i)$ is strictly increasing with $i$, then the marginal probability of finding the key precisely at depth $i$ is higher than at any earlier depth $j < i$ (including $j=0$ for an entirely fresh round). Consequently, the solver minimizes expected effort by always greedily choosing the currently deepest round, ensuring it never abandons a partially searched round.

Substituting $x = (1-p)^i$, which strictly decreases as $i$ increases, we rewrite $f$ as:
\[
\begin{split}
    f(x) &= \frac{p \gamma (1 - \rho_{\text{FN}}) x}{\bigl(x - (1-p)^B\bigr)\gamma + \rho_{\text{FN}}(1 - x)} \\
&= \frac{p \gamma (1 - \rho_{\text{FN}}) x}{x(\gamma - \rho_{\text{FN}}) + \rho_{\text{FN}} - \gamma(1-p)^B}.
\end{split}
\]

The sign of the derivative $\frac{\partial f}{\partial x}$ is entirely determined by $\rho_{\text{FN}} - \gamma(1-p)^B$. Since $x$ decreases as $i$ increases, the success probability $f(i)$ strictly \emph{increases} with depth $i$ if and only if $f(x)$ strictly decreases with $x$, requiring $\frac{\partial f}{\partial x} < 0$. This holds if and only if
\[
\begin{split}
\rho_{\text{FN}} - \gamma(1-p)^B < 0 &\implies \gamma(1-p)^B > \rho_{\text{FN}} \\
&\implies (1-\rho_{\text{FP}}^{(1)})(1-p)^B > \rho_{\text{FN}}.
\end{split}
\]
When this condition is satisfied, $f(i)$ strictly increases with $i$, meaning the relative concentration of true-key probability within the remaining search space consistently outpaces the accumulated risk from earlier false negatives. Conditioning on the assumption that the RS code can decode the resulting false negatives, the solver does not need to retry a false negative puzzle round as part of the solving strategy. Thus, continuing deeper in the current round dominates all alternative strategies, making the sequential greedy approach provably optimal.
\end{proof}

This condition is trivially satisfied when no FPs or FNs occur during solving; thus, the optimality claim remains valid for the adversary in the security analysis of Section~\ref{sec:secproof}.

\subsection{Failure Analysis for \texorpdfstring{$\mathsf{PUZ^n}$}{PUZn}}\label{sec:failure_comp}
We analyze the decoding failure probability for the composition of $n$ independent rounds.
For each round $i$, we define a random variable $F_i\in\{0,1,2\}$ encoding the outcome as:
\[
F_i =
\begin{cases}
0, & \text{if } \hat{k}_i = k_i \quad\text{(correct)},\\
1, & \text{if } \hat{k}_i = \perp \quad\text{(erasure)},\\
2, & \text{if } \hat{k}_i \neq k_i,\ \hat{k}_i\neq \perp \quad\text{(error)}.
\end{cases}
\]
Let $f_E$ denote the number of rounds that produce an error, and $f_\perp$ the number of rounds that yield an erasure. For an RS code with $2\omega$ parity symbols, decoding succeeds if and only if Eq.~\eqref{eq:rs_condition} holds:
\begin{equation}
2f_E + f_\perp \le 2{\omega}.
\label{eq:rs_condition}
\end{equation}
From Definition~\ref{def:rho}, for every independent puzzle round, the per-round outcome probabilities satisfy:
\[
\begin{split}
&\Pr[F_i = 2] \le \rho, \qquad \Pr[F_i = 1] \le \rho_\perp, \\
&\text{implying} \quad \mathbb{E}[F_i] \le 2\rho + \rho_\perp.
\end{split}
\]
Since the $F_i$ are independent and bounded within $[0,2]$, we can use Hoeffding’s inequality to bound the overall decoding failure probability of the composite puzzle $\mathsf{PUZ}^n$:
\[
\Pr\!\left[\sum_{i=1}^n F_i > 2\omega\right]
\le
\exp\!\left(
-\frac{2\bigl(2\omega - n(2\rho+\rho_\perp)\bigr)^2}
{4n}
\right),
\]
provided that $2\omega > n(2\rho+\rho_\perp)$.

\subsection{Solving Delay Analysis for \texorpdfstring{$\mathsf{PUZ^n}$}{PUZn}}\label{sec:solvingConcentration}
We analyze the total solving delay incurred by an honest solver executing $\mathsf{PUZ^n.Solve}$, defined as the total number of PUF invocations performed across all $n$ rounds and denoted $T$.
Each round independently contributes a bounded number of PUF invocations. Specifically, a round may conclude in one of three ways:
(i) a correctly recovered key, incurring a cost of $1+k_i$ PUF invocations;
(ii) a false negative (FN), where no candidate passes the threshold check and the entire key space of $2^q$ values is exhausted; or
(iii) a false positive (FP), where an incorrect key candidate prematurely passes the check.
By the Reed-Solomon decoding guarantee, successful decoding requires at most $2\omega$ FN rounds or at most $\omega$ FP rounds.

\begin{theorem}[Solving-Delay Bound]
\label{thm:time-concentration}
Let $T$ be the total solving delay measured as the total number of PUF invocations performed by the (legitimate) solver across $n$ puzzle rounds, and let $\omega$ be the maximum number of error corrections by the Reed-Solomon encoder.
Then for any $\delta\in(0,1)$, the solver completes solving the puzzle with probability at least $1-\delta$ within the interval
\[
T \in [\alpha-\Delta(\delta),\; \beta+\Delta(\delta)],
\]
where
\begin{align*}
\alpha &= (n-\omega)(1+\mathbb{E}[K]) + \omega, \\
\beta  &= (n-2\omega)(1+\mathbb{E}[K]) + 2\omega\cdot 2^q,
\end{align*}
and
\[
\Delta(\delta)=2^q\sqrt{\frac{n}{2}\ln\!\left(\frac{2}{\delta}\right)}.
\]
\end{theorem}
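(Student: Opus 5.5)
The plan is to write the total delay as a sum of independent bounded per-round costs, $T=\sum_{i=1}^n S_i$. Each $S_i$ is the number of PUF invocations made by $\mathsf{PUZ.Solve}(\ell\parallel i, r_i)$, and $S_i\in[1,2^q]$ always holds. By the unique labels $\ell\parallel i$ and the independence of the $n$ calls to $\mathsf{PUZ.Gen}$, the $S_i$ are independent: their joint law depends on the label only through $i$, and Definition~\ref{def:joint} makes them identically distributed as well. I would not insist on identical distribution, however, because Hoeffding needs only independence and boundedness. The argument then has two parts. First I will pin down deterministic \emph{centre} bounds $\alpha\le$ ``typical value of $T$'' $\le\beta$ on the event that the RS decoder succeeds. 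Second I will add a concentration slack $\Delta(\delta)$ obtained from Hoeffding's inequality on the sum of $n$ variables, each with range at most $2^q$.

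For the centre bounds I will split the rounds by outcome type, as in Section~\ref{sec:solvingConcentration}. A correct round costs $1+k_i$, with expectation $1+\mathbb{E}[K]$ by Section~\ref{sec:efficiency}. An FN round exhausts the key space and costs exactly $2^q$. An FP round costs at least $1$, since it stops on some candidate. Decoding succeeds under condition~\eqref{eq:rs_condition}, so at most $2\omega$ rounds are FN and at most $\omega$ are FP. This gives the two extreme configurations:
\begin{itemize}
\item \emph{Upper bound.} Take $2\omega$ FN rounds, each costing the maximum $2^q$, with the remaining $n-2\omega$ rounds correct. The expected contribution is $(n-2\omega)(1+\mathbb{E}[K])+2\omega\cdot 2^q=\beta$.
\item \emph{Lower bound.} Take $\omega$ FP rounds, each costing the minimum $1$, with the remaining $n-\omega$ rounds correct. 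The expected contribution is $(n-\omega)(1+\mathbb{E}[K])+\omega=\alpha$.
\end{itemize}
I will argue that every decodable outcome pattern has conditional mean in $[\alpha,\beta]$. Replacing a correct round by an FN round raises the cost, since $2^q\ge 1+\mathbb{E}[K]$. Replacing a correct round by an FP round can only lower it by costing at least $1$. Mixed patterns therefore interpolate between the two extremes.

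For the concentration step, I condition on the realized pattern of outcome types. Given the pattern, the $S_i$ remain independent and each lies in an interval of length at most $2^q$. Hoeffding's inequality then gives
\[
\Pr\bigl[\,|T-\mathbb{E}[T\mid\text{pattern}]|>\Delta\,\bigr]\le 2\exp\!\bigl(-2\Delta^2/(n\,2^{2q})\bigr)\le\delta,
\]
which yields $\Delta(\delta)=2^q\sqrt{\tfrac{n}{2}\ln(2/\delta)}$. This is the same computation as in Theorem~\ref{thm:basic-window-hoeffding}, with range $B-1\le 2^q$ per summand and $n$ summands. Combining this with the centre bounds gives $T\in[\alpha-\Delta(\delta),\beta+\Delta(\delta)]$ with probability at least $1-\delta$. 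Averaging over patterns keeps the bound, because it holds uniformly in every pattern.

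The main obstacle is the conditioning. Conditioning on a round's outcome type, or on overall decoding success, changes the law of $K_i$. A correct round, for instance, is conditioned on no earlier FP, so its mean is not exactly $1+\mathbb{E}[K]$. I would handle this in the same spirit as the paper's earlier analysis: treat $\rho_{\mathrm{FP}}^{(1)}$ as negligible so that correct rounds have cost distributed as $1+K$. An alternative is to absorb the small bias into the slack, which is possible because the Hoeffding width $\Delta(\delta)$ already dominates any bounded per-round shift in the regime considered.
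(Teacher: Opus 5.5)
Your proposal is correct and follows the paper's proof essentially step for step. You classify rounds as correct, FN or FP, use the decoding budget $2f_E+f_\perp\le 2\omega$ with the two extreme configurations to place the mean in $[\alpha,\beta]$, and apply Hoeffding with per-round range at most $2^q$ to get $\Delta(\delta)$; your conditioning on the full outcome pattern is a cleaner justification than the paper's bare appeal to Hoeffding, since that pattern is a product event and independence survives. Of your two remedies for the bias in the conditional law of $K_i$, only the first (treating a correct round's cost as distributed like $1+K$, which is exactly the paper's implicit approximation) gives the stated constants. The second fails because $\Delta(\delta)$ is the Hoeffding deviation itself, so a shift of the centre must be added on top of it rather than absorbed, however much larger $\Delta(\delta)$ is.
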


\begin{proof}[Proof]
The total solving delay $T$ is the sum of $n$ bounded per-round contributions.
For rounds that are correctly decoded by the RS code, the solver performs $1+K_i \in [1, 2^q]$ PUF invocations,
where $K_i$ follows the truncated geometric distribution.
For FN error rounds, the solver enumerates the entire key space, resulting in exactly $2^q$ PUF invocations.
For FP error rounds, we conservatively assume that the solver performs only a single PUF invocation to establish the lower bound.

Conditioned on a successful RS decoding, we know that either at most $2\omega$ rounds with FN error or at most $\omega$ rounds with FP error can occur. Hence, the lower and upper bounds of solving delay window are obtained by considering the worst-case and
best-case distributions of FN and FP error rounds permitted by the RS decoding. The expected solving delay of the honest solver is bounded as
\[
\mathbb{E}[T] \in
\bigl[
(n-\omega)(1+\mathbb{E}[K]) + \omega \, \,\;,\;\;
(n-2\omega)(1+\mathbb{E}[K]) + 2\omega\cdot 2^q
\bigr],
\]
which defines the interval $[\alpha,\beta]$. Applying Hoeffding’s inequality to the sum of these bounded contributions yields the stated theorem.
\end{proof}
\noindent\paragraph{Realistic Solving Delay Window.}
The bound in Theorem~\ref{thm:time-concentration} is RS-aware and distribution-free, producing a conservative worst-case window. In practice, however, the dominant contribution to the honest solver's delay $T$ comes exclusively from rounds that are correctly decoded without FN or FP errors. In this typical regime, $T$ is well-approximated as a sum of $n$ i.i.d.\ contributions $1+K_i$, where each $K_i$ follows the truncated geometric distribution.

Since per-round delays have finite variance $\sigma^2 = \mathrm{Var}[K]$, the Central Limit Theorem guarantees convergence of the normalized sum to a Gaussian as $n$ grows. To replace this asymptotic approximation with a rigorous finite-sample guarantee, we apply the Berry-Esseen theorem. Specifically, we use Shevtsova's tightened bound~\cite{shevtsova2011absolute}, which bounds the maximum CDF approximation error between the normalized sum and the standard normal CDF $\Phi(z)$, provided the absolute third central moment $\varrho = \mathbb{E}[|K - \mathbb{E}[K]|^3]$ satisfies $\varrho \ge 1.286\sigma^3$:

\begin{multline*}
\sup_z \Biggl| \Pr\!\biggl( \frac{T - \mathbb{E}[T]}{\sqrt{\mathrm{Var}[T]}} \le z \biggr) - \Phi(z) \Biggr| \\
\le \frac{0.3328(\varrho + 0.429\sigma^3)}{\sigma^3 \sqrt{n}} := \Delta_{\text{BE}}.
\end{multline*}
where $\mathbb{E}[T]=n(1+\mathbb{E}[K])$ and $\mathrm{Var}[T]=n\sigma^2$. For practical puzzle compositions where $p \ll 1$, the geometric distribution closely approximates an exponential decay, yielding a skewness ratio $\frac{\varrho}{\sigma^3} \approx \frac{12}{e}-2 \approx 2.4145$, which satisfies Shevtsova's prerequisite. Substituting this value directly produces a tight closed-form bound on the maximum deviation:
\[
\Delta_{\text{BE}} \approx \frac{0.3328(2.4145 + 0.429)}{\sqrt{n}} \approx \frac{0.946}{\sqrt{n}}.
\]
As a concrete example, a composition of $n = 1000$ rounds guarantees a maximum CDF deviation of at most $\Delta_{\text{BE}} \approx 0.03$ (i.e., $3\%$).

To construct a strict $([\alpha,\beta],\delta)$ solving-time window, we absorb this approximation error into the normal tails by setting $z_{\text{BE}} = \Phi^{-1}(1 - \delta/2 + \Delta_{\text{BE}})$, which guarantees $1-\delta$ coverage and yields the tighter, realistic bounds:
\begin{align*}
    \alpha= \mathbb{E}[T] - z_{\text{BE}}\sqrt{\mathrm{Var}[T]}, 
    \qquad 
    \beta= \mathbb{E}[T] + z_{\text{BE}}\sqrt{\mathrm{Var}[T]}.
\end{align*}
This formulation exploits geometric concentration properties rather than worst-case limits, producing a substantially tighter analytical window that accurately reflects practical implementations, as experimentally verified in Section~\ref{sec:Experimets}.

\subsection{Efficiency Analysis for \texorpdfstring{$\mathsf{PUZ}^n$}{PUZn}}\label{sec:efficiency_comp}
Following Definition~\ref{def:puz_efficiency}, the efficiency $\mu_n$ evaluates the expected honest solving delay relative to the generation effort. Because $\mathsf{PUZ^n.Gen}$ constructs exactly $n$ independent basic puzzles, and its solving demands executing exactly $n$ sequential rounds with an expected total number of PUF evaluations of $n(1+\mathbb{E}[K])$, both efforts scale linearly by the exact same composition multiplier. Consequently, $\mu_n = \mu$. This shows that our composition successfully scales the absolute time delay and sharply restricts solver variation, without sacrificing the efficiency gap between the puzzle generator and solver.

\section{Security Analysis}\label{sec:secproof}
This section formally establishes the security of the composite puzzle $\mathsf{PUZ}^n$ in terms of early-solve hardness (Definition~\ref{def:early}). We prove security by analyzing a strictly more powerful adversary than the legitimate solver, and bounding their ability to bypass physical PUF queries using an information-theoretic analysis combined with Hoeffding's inequality.

\paragraph{Adversarial Model and Extra Capability.}\label{sec:adv_model}
Consider an adversary $\mathcal{A}$ attempting to solve the composite $n$-round puzzle $\mathsf{PUZ}^n$ significantly faster than the honest solving-delay window. The adversary runs at most $a$ classical computational steps and makes $s_{\mathcal{A}}$ total physical PUF invocations. To obtain a robust worst-case security bound, we grant the adversary the following idealized capabilities:
\begin{enumerate}
    \item \textbf{Noise-Free Oracle Access:} The adversary has perfect, noiseless oracle access to the PUF, incurring no false positive or false negative errors during solving.
    \item \textbf{Oracle Testing Budget:} The adversary has an oracle testing budget of $a$ steps that can finish within no time. This budget models some additional classical computational power that an adversary is willing to spend to speed up the computation. To cleanly capture the capability of this additional computation budget, we conservatively model each algorithmic step as eliminating at most one key candidate (e.g., by evaluating a guessed combination against the final $\mathsf{Hash}(\mathsf{key})$). This grants the adversary a maximum total information-resolving capacity of $\log_2 a$ bits.
\end{enumerate}

While the above idealized capabilities grant the adversary maximum algorithmic and informational leverage, we do not assume the attackers can launch invasive physical manipulations on the PUF. In practical deployments, our assumptions implicitly require a trusted, tamper-evident hardware boundary, which may necessitates additional active sensors. We note that because our P-TLP requires exact replication of the PUF responses for successful decoding, invasive physical attacks may lead to the degradation of the PUF reliability. Such degradation may ultimately cause decoding failures rather than enabling a faster, early-solve vulnerability.

The goal of the adversary is to leverage the additional oracle testing budget alongside the structural redundancies of the puzzle to recover the secret while minimizing their total number of physical PUF invocations $s_{\mathcal{A}}$. 

\begin{definition}[Early-Solve Hardness]\label{def:early}
Let $\mathsf{A}(a)$ denote the set of PPT algorithms $\mathcal{A}$ with oracle access to PUF $\mathsf{P}$, which on input helper data $h$ outputs a candidate key $\hat{k}$ in at most $a$ parallel classical computational steps with solving cost $s$ (the number of PUF invocations), i.e., $(\hat{k},s)\leftarrow \mathcal{A}^{\mathsf{P}}(h)$.
We say that the puzzle satisfies \emph{early-solve hardness at $\alpha$} if there exists a function $\alpha_{\mathrm{eff}}(a) = \alpha - O(\log a)$ and a negligible function $\mathrm{negl}(\cdot)$ such that for every $a$, every $\mathcal{A}\in\mathsf{A}(a)$, and every solving effort $\xi < \alpha_{\mathrm{eff}}(a)$,
\begin{multline*}
\Pr\bigl[s \le \xi \wedge \hat{k} = K \bigm| (\hat{k};s) \gets \mathcal{A}^{\mathsf{P}}(h), H=h\bigr] \\
   \;\le\; \mathrm{negl}\!\left(\alpha_{\mathrm{eff}}(a) - \xi \right).
\end{multline*}
\end{definition}
\noindent \textit{Note.}
The quantity $(\alpha_{\mathrm{eff}}(a) - \xi)$ is the time margin by which the adversary attempts to solve the puzzle ahead of the effective threshold $\alpha_{\mathrm{eff}}(a)$.

\paragraph{Information-Theoretic Constraints.}\label{sec:adv_constraints}
The $\mathsf{PUZ}^n$ is encapsulated using a RS code over $\mathrm{GF}(2^q)$, with $2\omega$ parity symbols appended to accommodate benign evaluation errors. From an adversarial perspective, these parity symbols represent exploitable informational redundancy that can be used to resolve puzzle uncertainty without additional PUF queries.

Since each puzzle symbol encodes $q$ bits, the $2\omega$ RS parity symbols allow the adversary to resolve exactly $2\omega \cdot q$ bits of uncertainty via erasure decoding without any PUF queries. Combined with the oracle testing budget, the total amount of uncertainty the adversary can resolve without physically querying the PUF is bounded by:
$E_{\text{max}} \le 2\omega \cdot q + \log_2 a$ bits.

To minimize physical evaluations, the adversary may selectively skip the search space in certain puzzle rounds. Let $g_i$ denote the number of candidate keys omitted from PUF evaluations in round $i$, which introduces $\log_2 g_i$ bits of residual uncertainty. Any successful solving strategy must satisfy:
\begin{equation}\label{eq:adv_entropy_limit}
\sum_{i=1}^n \log_2 g_i \le 2\omega \cdot q + \log_2 a.
\end{equation}

\paragraph{Optimal Adversarial Skipping Strategy.}\label{sec:ptlp-security-bound}
Given the entropy budget in Eq.~\eqref{eq:adv_entropy_limit}, the adversary seeks the allocation of unqueried candidates $\{g_i\}$ that maximizes the total number of skipped PUF invocations $\sum_i g_i$. This formulates as:
\[
\begin{aligned}
&\text{maximize} && \sum_{i=1}^{n} g_i \\
&\text{s.t.} && \sum_{i=1}^{n} \log_2 g_i \le E_{\text{max}}, \quad 0 \le g_i \le 2^q.
\end{aligned}
\]
The sum $\sum_i g_i$ is maximized when the budget is concentrated rather than spread evenly. Given the per-round cap $g_i \le 2^q$, the optimal strategy is to set as many rounds as possible to $g_i = 2^q$ (i.e., skip the round entirely).

Hence, the maximum number of completely skipped rounds is $2\omega + \left\lfloor \frac{\log_2 a}{q} \right\rfloor$. Note that while list decoding can theoretically correct errors beyond the threshold~\cite{brakensiek2023generic}, it is only useful when the attacker is querying a noisy PUF. Because our idealized security model assumes a best-case scenario for the adversary with noise-free PUF access, list decoding provides no additional advantage. Hence, the adversary must physically solve each remaining round by sequentially querying the PUF.

\begin{theorem}[Early-Solve Hardness of $\mathsf{PUZ}^n$]\label{thm:ptlp-security}
Let the effective early-solve threshold be $\alpha_{\mathrm{eff}}(a) = n'(1+\mathbb{E}[K])$, where $n' = n - 2\omega - \left\lfloor \frac{\log_2 a}{q} \right\rfloor$ is the number of rounds the adversary must physically solve. Then for every $a$, every $\mathcal{A}\in\mathsf{A}(a)$, and every candidate solving effort $\xi < \alpha_{\mathrm{eff}}(a)$, the puzzle $\mathsf{PUZ}^n$ satisfies:
\begin{equation}\label{eq:ptlp_early_solve}
\Pr\!\big[\, s_{\mathcal{A}} \le \xi \ \wedge\ \hat{\mathsf{key}}  = \mathsf{key} \,\big] \;\le\; \epsilon(\alpha_{\mathrm{eff}}(a) - \xi),
\end{equation}
where $\epsilon(x) := \exp\!\left(-\frac{2x^2}{n'(2^q-1)^2}\right)$ is negligible in $x$.
\end{theorem}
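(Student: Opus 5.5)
The proof reduces the adversary to an honest sequential solver on $n'$ rounds and then applies Hoeffding's inequality to that sum.

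\textbf{Reduction to $n'$ rounds.} By the information-theoretic constraint in Eq.~\eqref{eq:adv_entropy_limit} and the optimal skipping analysis above, any adversary $\mathcal{A}\in\mathsf{A}(a)$ that outputs $\hat{\mathsf{key}}=\mathsf{key}$ can avoid PUF evaluation on at most $2\omega + \lfloor \log_2 a / q\rfloor$ complete rounds. Partial skips are dominated by full skips, since $\sum_i g_i$ is maximized by concentrating the budget. We therefore fix, without loss of generality, an index set $I$ of size $n'$ of rounds that $\mathcal{A}$ must resolve physically. For $i\in I$, the key $K_i$ is independent of everything $\mathcal{A}$ has seen from other rounds, because of the distinct labels $\ell\parallel i$ and the random-oracle modelling of $\mathrm{Hash}$ and the PUF. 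Hence the only way to identify $K_i$ is to query the PUF at $\mathrm{Hash}(\hat k\parallel \ell\parallel i)$ for candidates $\hat k$.

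\textbf{Cost of each forced round.} In the noise-free setting the condition of Theorem~\ref{thm:noisy_greedy} holds trivially, and Lemma~\ref{lem:greedy_opt} shows that greedy enumeration from $0$ is optimal. A strict stochastic-dominance version is what is actually needed: for any enumeration order and every $t$, the probability of having hit $K_i$ within $t$ queries is at most $\Pr[K_i\le t-1]$. This follows because greedy covers the $t$ most likely keys, and it is the same rearrangement argument applied to tails rather than expectations. So the physical cost $S_i$ spent on round $i$ stochastically dominates $1+K_i\in[1,2^q]$. Summing over $I$, $s_{\mathcal{A}}\ge \sum_{i\in I}S_i$, which stochastically dominates $T':=\sum_{i\in I}(1+K_i)$. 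The latter is a sum of $n'$ independent variables with mean $\alpha_{\mathrm{eff}}(a)=n'(1+\mathbb{E}[K])$ and each with range $2^q-1$.

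\textbf{Concentration.} For $\xi<\alpha_{\mathrm{eff}}(a)$ the one-sided Hoeffding inequality gives
\[
\Pr[s_{\mathcal{A}}\le\xi \wedge \hat{\mathsf{key}}=\mathsf{key}]\le \Pr[T'\le \xi]\le \exp\!\left(-\frac{2(\alpha_{\mathrm{eff}}(a)-\xi)^2}{n'(2^q-1)^2}\right)=\epsilon(\alpha_{\mathrm{eff}}(a)-\xi).
\]
This function is negligible in $x=\alpha_{\mathrm{eff}}(a)-\xi$. Moreover $\alpha_{\mathrm{eff}}(a)=\alpha-O(\log a)$ up to the fixed $2\omega$ term, consistent with Definition~\ref{def:early}.

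\textbf{Main obstacle.} The hard step is the first one: justifying that the adversary's advantage from the parity symbols and the testing budget is captured exactly by removing $2\omega+\lfloor\log_2 a/q\rfloor$ whole rounds. The argument must also show that adaptive interleaving, where later rounds are chosen based on earlier outcomes, does not break the independence needed for Hoeffding's inequality. I would handle this by a coupling argument. Conditioned on the adaptively chosen set $I$, the keys of the rounds in $I$ are still i.i.d.\ truncated geometric, since choosing which rounds to skip reveals no information about unqueried keys under the random-oracle model. Hoeffding's inequality is then applied conditionally and the resulting bound is averaged over $I$.
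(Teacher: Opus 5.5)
Your proof is the paper's proof. The paper likewise takes as given the $n'$ rounds left after fully skipping $2\omega+\lfloor \log_2 a/q\rfloor$ of them, and invokes Lemma~\ref{lem:greedy_opt} for the noise-free adversary in the tail form you describe (greedy ``maximizes the cumulative probability of success at any bounded step limit'') to charge $1+K_i\in[1,2^q]$ per round. It then applies one-sided Hoeffding with range $2^q-1$ around the mean $n'(1+\mathbb{E}[K])=\alpha_{\mathrm{eff}}(a)$, so your stochastic-dominance wording is a more careful version of its ``costs exactly $1+K_i$''.

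The paper never addresses adaptivity, so your last paragraph goes beyond it, but its key claim is false as stated. An adversary that erases a round only after missing candidates $0,\dots,d-1$ there is selecting on $K_i\ge d$, so conditioned on $I$ the surviving keys are biased low. A correct treatment would have to charge the queries spent in abandoned rounds. It would also need a joint coupling rather than per-round marginal dominance once within-round orders depend on other rounds.
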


\begin{proof}
Since the adversary possesses noise-free oracle access, and by Lemma~\ref{lem:greedy_opt}, the greedy sequential strategy maximizes the cumulative probability of success at any bounded step limit, each of the remaining $n'$ required rounds costs exactly $1 + K_i$ PUF invocations, where $K_i \in [0, 2^q - 1]$. Therefore, the adversary's total physical solving cost is $s_{\mathcal{A}} = \sum_{i=1}^{n'} (1 + K_i)$.
The expected total cost exactly equals the effective threshold: $\mathbb{E}[s_{\mathcal{A}}] = n'(1+\mathbb{E}[K]) = \alpha_{\mathrm{eff}}(a)$. Since each summand $1 + K_i$ is bounded in $[1, 2^q]$ with range $2^q - 1$, applying Hoeffding's inequality directly yields that for any target $\xi < \alpha_{\mathrm{eff}}(a)$:
\begin{align*}\label{eq:hoeffding_adv}
\Pr[s_{\mathcal{A}} \le \xi] &= \Pr\!\left[s_{\mathcal{A}} - \alpha_{\mathrm{eff}}(a) \le -(\alpha_{\mathrm{eff}}(a) - \xi)\right] \nonumber \\
&\le \exp\!\left(-\frac{2(\alpha_{\mathrm{eff}}(a) - \xi)^2}{n'(2^q - 1)^2}\right).\qedhere
\end{align*}
\end{proof}

We assume honest solvers lack the additional computational resources necessary to perform advanced erasure-skipping strategies, and thus follow the standard sequential decoding. In practical deployments, the chosen parameters can explicitly target either the honest solving bound or the adversarial early-solve bound, depending on the assumed capabilities of the solver.

\section{Experiments}\label{sec:Experimets}
We physically evaluated our P-TLP design by implementing the composite puzzle $\mathsf{PUZ}^n$
on a Xilinx Artix-7 FPGA using the Digilent Nexys-A7 evaluation board, accessed from a host
machine via a UART interface.

\begin{figure}[t]
    \centering
    \includegraphics[width=\columnwidth]{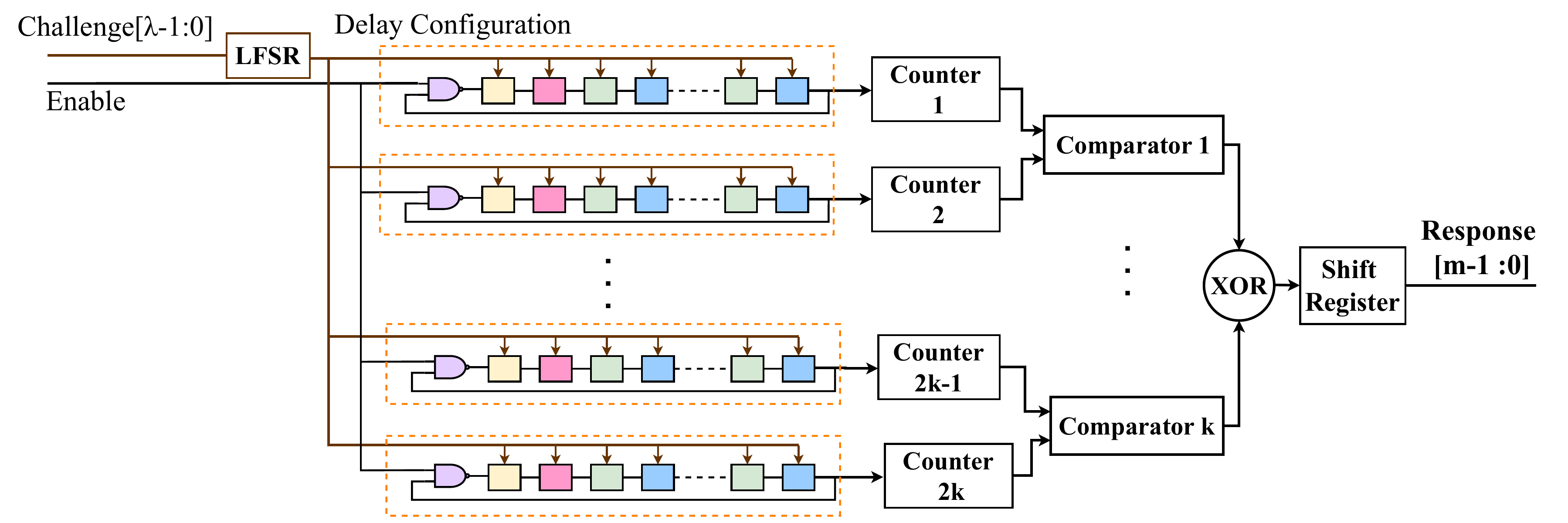}
    \caption{XOR Configurable Ring Oscillator PUF}
    \label{fig:CROPUF}
\end{figure}

\subsection{PUF Design and Hardware Implementation}\label{sec:PUF_imp}
The underlying physical primitive driving our P-TLP is an XOR Configurable Ring Oscillator (CRO) PUF implemented on the FPGA, illustrated in Figure~\ref{fig:CROPUF}. The architecture of each RO is inspired by the configurable construction detailed in~\cite{deng2020configurable}, utilizing hybrid delay-configurable units (DCU) to dynamically shift oscillator behavior. Note that our puzzle construction is generic, so it can be realized with any secure strong PUF. 

In our XOR CRO PUF, each RO consists of $64$ independent DCUs. Each DCU is composed of a pair of logic gates whose routing paths and effective propagation delays can be explicitly adjusted via a configuration input (CI) bit. Consequently, sweeping different CI configurations yields distinct intrinsic oscillation frequencies. Our implementation instantiates $10$ structurally identical configurable ROs arranged into $5$ disjoint pairs. For each pair, the oscillation frequencies are allowed to run freely and are measured by high-resolution counters over a fixed observation window before passing into digital comparators. The PUF output response for a given configuration is exactly the combined XOR of these $5$ parallel comparison results. Mathematically, each pair of the configurable ROs can be modeled in the same way as an arbiter PUF~\cite{Gassend2002}, so technically the entire XOR CRO PUF can be considered as a slower implementation of a 5-XOR arbiter PUF~\cite{Suh2007}. 

For protocol integration, the $\lambda$-bit puzzle challenge state directly seeds a linear feedback shift register (LFSR) where $\lambda=64$. The LFSR deterministically expands the challenge seed to extract distinct configuration vectors internally on-chip, dynamically feeding the arrays to output an accumulated $m$-bit response vector where $m=64$.

A defining advantage of the CRO PUF, unlike classical Arbiter PUFs, is the designer's ability to explicitly control the physical evaluation latency. The physical latency of a single query is strictly governed by the \texttt{count\_depth} parameter. This parameter dynamically locks the number of dedicated clock cycles over which the RO oscillations are aggregated, establishing a constant time $\tau_{\mathrm{P}}$ per PUF evaluation. 

In our deployments, we bound $\tau_P$ and balance noise profiles by fixing \texttt{count\_depth} to $10{,}000$. This specific depth suppressed environmental noise to deliver a robust intra-device physical reliability of approximately $98\%$, establishing a response uniformity clustered between $42\%$ and $44\%$. Crucially, because the trusted computing base is strictly limited to the FPGA encompassing the LFSR, the PUF, and the UART interface, any other puzzle solving operations occur on an untrusted host computer. We assume the physical TCB provides a trusted, tamper-proof hardware enclosure that prevents the adversary from modifying the internal logic to maliciously reduce the evaluation latency. Hence, the adversarial physical latency bounds analyzed in our proofs apply holistically to this entire hardware TCB. While an adversary remains unconstrained to implement and accelerate their software solving strategy, Theorem~\ref{thm:noisy_greedy} has already proven the optimal strategy for minimizing the expected solving time; thus, our rigid non-parallelizable hardware latency firmly bounds the solving delay.

\begin{table*}[t]
\caption{%
For each parameter setting, a random puzzle instance is generated, and then the solving algorithm is evaluated over the specified number of runs. %
}
\label{table_result1}
\centering
\resizebox{\textwidth}{!}{%
\begin{tabular}{|c c c c||c c||c c c c||c c|}
\hline
\multicolumn{4}{|c||}{\textbf{Parameters}} &
\multicolumn{2}{c||}{\textbf{Theoretical prediction}} &
\multicolumn{4}{c||}{\textbf{Experimental results}} &
\multicolumn{2}{c|}{\textbf{Errors} \text{(/ run)}} \\
\hline
$n$ & $B$ & $p$ & \#Runs &
$\mathbb{E}[K]$ &
$\mathbb{E}[T]$ &
$^\dagger$ Mean $[T]$ &
$^\ddagger$STD $[T]$ &
Gen. time (s) &
Solve. time (s) &
FN &
FP \\
\hline
100 & 1000 & 0.001 & 100 &
417 &
41,848 &
42,319 &
0 &
5.30 &
678.82 &
1 &
0 \\
\hline
200 & 1000 & 0.001 & 100 &
417 &
83,697 &
81,531 &
89.7 &
7.26 &
1,307.77 &
2 &
0.27 \\
\hline
300 & 5000 & 0.001 & 100 &
965 &
289,850 &
273,000 &
132.1 &
10.65 &
4,374.26 &
1 &
0.20 \\
\hline
\end{tabular}
}
\footnotesize{$^\dagger$ The mean of the measured solving delay $T$ (the total PUF invocations no.) over runs.\\
$^\ddagger$ The standard deviation of the measured solving delay $T$ (the total PUF invocations no.) over runs. }
\end{table*}

\begin{table*}[t]
\caption{%
For each parameter setting, independent puzzle instances are generated and solved. Theoretical values are computed from the truncated geometric model, while experimental results are averaged over the specified number of runs. %
}\label{table_result2}
\centering
\resizebox{\textwidth}{!}{%
\begin{tabular}{|c c c c||c c c||c c c c||c c|}
\hline
\multicolumn{4}{|c||}{\textbf{Parameters}} &
\multicolumn{3}{c||}{\textbf{Theoretical prediction}} &
\multicolumn{4}{c||}{\textbf{Experimental results}} &
\multicolumn{2}{c|}{\textbf{Errors} \text{(/ run)}} \\
\hline
$n$ & $B$ & $p$ & \#Runs &
$\mathbb{E}[K]$ &
$\mathbb{E}[T]$  &
$\sigma_T$  &
$^\dagger$Mean $[T]$  &
$^\ddagger$ STD  $[T]$&
Gen. time (s) & 
Solve. time (s) &
FN &
FP \\
\hline
100 & 1000 & 0.001 & 100 &
417 &
41,848 &
2,816 &
42,528 &
3,117 &
1.74 &
682.16&
0.98 &
0.06 \\
\hline
200 & 1000 & 0.001 & 100 &
417 &
83,697 &
3,983 &
84,353 &
3,821 &
3.34 &
1,352.93&
1.65 &
0.10 \\
\hline
1000 & 9000 & 0.0005 & 10 &
1,898&
1,899,010&
55,487&
1,925,477 &
59,341 &
21.80 &
8h, 34m, 24.41s&
7.8 &
2.4 \\
\hline
3000 & 10000 & 0.0005 & 10 &
1,931 &
5,796,746 &
99,738 &
5,938,065  &
103,273 &
50.32 &
26h, 24m, 14.46s &
20 &
6.6 \\
\hline
6000 & 30000 & 0.00015 & 1 &
6,329 &
37,978,607 &
453,104 &
38,549,120 &
/ &
125.97 &
7d+3h+36m+33.50s&
36 &
37 \\
\hline
\end{tabular}
}
\footnotesize{$^\dagger$ The mean of the measured solving delay $T$ (the total PUF invocations no.) over runs.\\
$^\ddagger$ The standard deviation of the measured solving delay $T$ (the total PUF invocations no.) over runs.}
\footnotesize{ }

\end{table*}

\begin{figure}[t]
  \centering
  \includegraphics[width=\columnwidth]{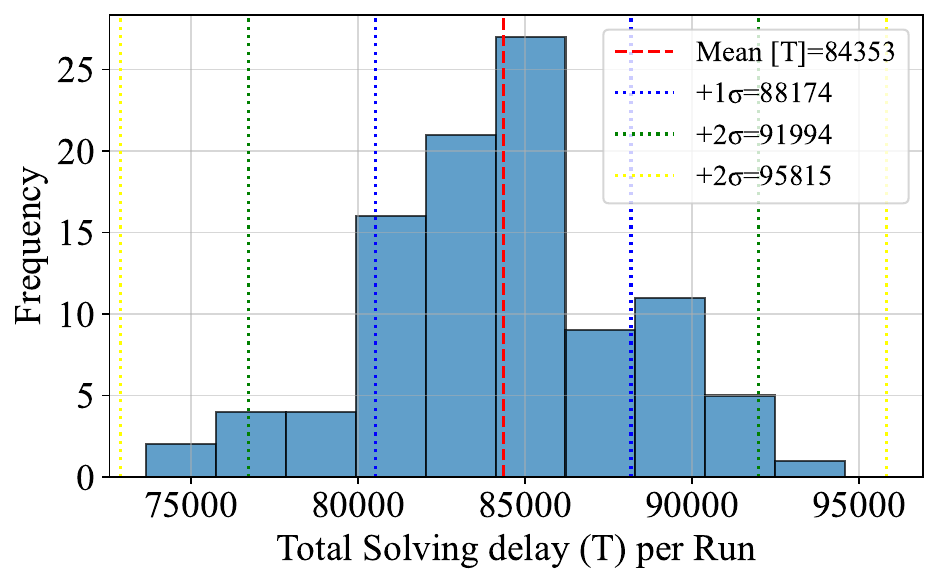}
  \caption{Solving time concentration of $\mathsf{PUZ}^n$ with $n=200$, $B=10^3$, $p=0.001$ in 100
  independent runs.}
  \label{fig:window_2}
\end{figure}

\subsection{Experimental Results Analysis}
We instantiated $\mathsf{PUZ}^n$ following Sections~\ref{sec:basicPUZ} and~\ref{sec:composition_puz}. We target various puzzle solving delays by adjusting $(n, p, B)$ while satisfying Theorem~\ref{thm:noisy_greedy}, and fix the Hamming distance threshold to $\tau=0.2$ based on our measured $\approx 2\%$ noise rate, balancing FN and FP error rates. Notably, early termination based on partial bit collisions is impossible, as the 64 configuration steps aggregate atomically in hardware and cannot be segmented externally.

Tables~\ref{table_result1} and~\ref{table_result2} detail our main experimental results, rigorously capturing hardware variance, evaluation latency, and intrinsic physical stability limits. To examine error stability versus algorithmic delay concentration independently, we split our methodology explicitly: Table~\ref{table_result1} observes solving delay concentration when running the optimal algorithm on a \emph{single, randomly generated puzzle instance} over 100 runs. Conversely, Table~\ref{table_result2} tests solving delay concentration given a fixed parameter setting by performing runs against \emph{independently generated random puzzle instances with the same parameter settings}.

\textbf{Decoding Stability and Intrinsic Noise Variance:}
The results documented in Table~\ref{table_result1} conclusively highlight profound stability within the physical evaluation process. Because the tested puzzle instance is fixed, the theoretical physical PUF evaluations $T$ should remain completely static, with practical execution variance resulting solely from FN/FP errors. Consequently, the measured standard deviation is demonstrably small compared to the puzzle solving time (e.g., $\sigma_T = 89.7$ spanning an execution mean of $\approx 81{,}531$ rounds under $n=200$). In the first row ($n=100$), physical evaluation drift generated exactly one single False Negative that repeated in the exact same round across the 100 runs, resulting in a recorded standard deviation of $0$. Crucially, the Reed-Solomon code could correct the small errors and successfully recover the secret across all experiments.

\textbf{Geometric Concentration Bounds vs Experimental Evaluation:}
Table~\ref{table_result2} evaluates our theoretically analyzed solving time concentration by solving random puzzle instances using identical parameter settings, confirming the robust delay predictions established in Section~\ref{sec:solvingConcentration}. In all cases involving more than 10 runs, the measured mean value of the solving time $T$ matches remarkably well with the theoretical prediction $\mathbb{E}[T]$. A clear trend indicates that the measured mean effectively converges to the theoretical prediction as the number of independent runs increases. Additionally, the standard deviations of evaluated distributions scaled exactly as expected (e.g., tracking empirically at $103{,}273$ against a theoretically projected value of $99{,}738$ iterations), verifying the correctness of our bounding analysis in Section~\ref{sec:solvingConcentration}. A plot of the solving time distribution for $n=200$ is illustrated in Figure~\ref{fig:window_2}. We further benchmarked a long 7-day puzzle instance at $n=6000$, successfully containing aggregate noise profiles to just 36 FN errors and 37 FP errors on average, which is beneath the Reed-Solomon decoding capacity.

\section{Potential Applications in Digital Legacy Management}\label{sec:applications}
Digital legacy management involves securing sensitive personal data, accounts, and virtual assets so that designated heirs can access them only after a prolonged delay following the owner's passing. Traditional software-based TLPs are poorly suited for this scenario: as purely digital artifacts, they can be freely copied and solved simultaneously by multiple parties on independent machines. Furthermore, since computational speed advances unpredictably over decades, setting an accurate delay for a puzzle that must survive until the far future is highly speculative, exposing the legacy to premature decryption. Our P-TLP resolves both issues by anchoring the solving process to a unique, unclonable physical hardware token. The puzzle cannot be duplicated or solved in parallel across machines, transforming it into a tangible physical ``will'' that only the designated heir in possession of the specific token can initiate and complete. This strict hardware binding inherently prevents remote attackers from copying or accelerating the puzzle, making P-TLP uniquely suited for secure digital legacy management.

\section{Future Work and Conclusion}
\label{sec:conclusion}
One practical consideration for long-term deployment is PUF aging: over extended periods, such as a decade, PUF aging increases inherent noise, potentially affecting puzzle reliability beyond what the Reed-Solomon code can accommodate. While decade-long stability falls outside our current scope and remains a practical limitation of the work, this degradation could theoretically be managed by chaining puzzles with gradually increased error-correction capability. Investigating the practical efficacy of these mitigations is an interesting direction for future work.

In conclusion, this paper introduced P-TLP, shifting the foundational anchor of delay from algorithmic steps to the intrinsic, hardware-bounded physical latency of silicon hardware. Our composite construction, supported by formal analysis and FPGA experiments, achieves a tightly concentrated and predictable solving-delay window, establishing a practical path toward hardware-anchored time-lock applications.

\section*{Acknowledgments} Chenglu Jin and Marten van Dijk are (partially) supported by project CiCS of the research programme Gravitation, which is (partly) financed by the Dutch Research Council (NWO) under the grant 024.006.037.

\bibliographystyle{IEEEtran}
\bibliography{bib/puf_detailed_20151112}

\end{document}